\newif\ifIEEE
\IEEEtrue
\ifIEEE
   \documentclass[journal,final,letterpaper]{IEEEtran}
   \newcommand{\bibauthor}[1]{#1}
   \newcommand{\bibpaper}[1]{``#1''}
   \newcommand{\Footnotetext}[2]
   {
      \begin{figure}[!b]
      \footnotesize\vspace{-3ex}\hrulefill\hfill
      \makebox[0em]{}\hfill\makebox[0em]{}\par${}^{#1}$ #2\vspace{-.6ex}
      \end{figure}
      \addtocounter{figure}{0}
   }
\else
   \documentclass[12pt]{article}
   \sloppy
   \textheight8.5in
   \textwidth6.5in
   \hoffset -0.6in
   \parskip 8pt
   \usepackage{amsthm}
   \newcommand{\bibauthor}[1]{\textsc{#1}}
   \newcommand{\bibpaper}[1]{\textsl{#1}}
   \newenvironment{IEEEkeywords}{\begin{small}%
                                  \textbf{Index Terms} ---}{\end{small}}
\fi
\usepackage{amsfonts,bm}
\usepackage{amsthm}
\usepackage{eepic}
\newcommand{\bibbook}[1]{\textit{#1}}

\newcommand{\bibperiodical}[1]{\textit{#1}}

\newtheorem{theorem}{Theorem}
\newtheorem{proposition}[theorem]{Proposition}

\theoremstyle{remark}

\theoremstyle{definition}
\newtheorem{example}{Example}

\renewcommand{\mathbf}[1]{{\bm{#1}}}     
\newenvironment{myalgorithm}{%
        \begin{minipage}{\columnwidth}\vspace{1.8ex}
        \makebox[0ex]{}\hrulefill\makebox[0ex]{}\\*%
             }{%
             \makebox[0ex]{}\hrulefill\makebox[0ex]{}\end{minipage}}
\newenvironment{clause}{%
        \begin{list}{}{\setlength{\topsep}{0ex}%
             \settowidth{\leftmargin}{\quad}%
             \setlength{\itemsep}{1.0ex}%
             \setlength{\parsep}{-.5ex}}}{\end{list}}
\newcommand{\GF}{{\mathrm{GF}}}
\newcommand{\bldzero}{{\mathbf{0}}}
\newcommand{\bldalpha}{{\mathbf{\alpha}}}
\newcommand{\bldbeta}{{\mathbf{\beta}}}
\newcommand{\bldc}{{\mathbf{c}}}
\newcommand{\blde}{{\mathbf{e}}}
\newcommand{\blds}{{\mathbf{s}}}
\newcommand{\bldu}{{\mathbf{u}}}
\newcommand{\bldw}{{\mathbf{w}}}
\newcommand{\bldx}{{\mathbf{x}}}
\newcommand{\bldy}{{\mathbf{y}}}
\newcommand{\code}{{\mathcal{C}}}
\newcommand{\Code}{{\mathsf{C}}}
\newcommand{\Ber}{{\mathrm{Ber}}}
\newcommand{\Integers}{{\mathbb{Z}}}
\newcommand{\Int}[1]{{\left[{#1}\right\rangle}}
\newcommand{\distance}{{\mathsf{d}}}
\newcommand{\weight}{{\mathsf{w}}}
\newcommand{\encoder}{{\mathcal{E}}}
\newcommand{\Encoder}{{\mathsf{E}}}
\newcommand{\decoder}{{\mathcal{D}}}
\newcommand{\Decoder}{{\mathsf{D}}}
\newcommand{\Lee}{{\mathcal{L}}}
\newcommand{\Hamming}{{\mathcal{H}}}
\newcommand{\Sphere}{{\mathcal{S}}}
\newcommand{\Volume}{V}
\newcommand{\failure}{{\mathrm{``e"}}}
\newcommand{\Mod}{{\scriptstyle \mathrm{MOD}}}
\newcommand{\transpose}{{\mathsf{T}}}
\newcommand{\mod}{{\mathrm{mod}}}
\newcommand{\threshold}{\vartheta}
\newcommand{\tn}{{\tilde{n}}}
\newcommand{\Title}{Fault-Tolerant Dot-Product Engines}
\newcommand{\Name}{Ron M. Roth}
\newcommand{\Address}{Computer Science Department,
                      Technion,
                      Haifa 3200003, Israel}
\newcommand{\Email}{ronny@cs.technion.ac.il}
\newcommand{\Thnx}{\Name\ is with \Address. Email: \Email}
\newcommand{\Grant}{This work was done in part while visiting
            Hewlett Packard Laboratories, 1501 Page Mill Road,
            Palo Alto, CA 94304.}

\begin{document}
\ifIEEE
   \title{\Title}
      \author{\IEEEauthorblockN{\Name}\\
              \IEEEauthorblockA{\Address\\ \Email}
      }
\else
   \title{\textbf{\Title}\thanks{\Grant}}
   \author{\textsc{\Name}\thanks{\Thnx}}
   \date{}
\fi
\maketitle
\ifIEEE  \thispagestyle{empty}  \fi

\begin{abstract}
Coding schemes are presented that provide the ability
to correct and detect computational errors while using dot-product
engines for integer vector--matrix multiplication.
Both the $L_1$-metric and the Hamming metric are considered.
\end{abstract}

\begin{IEEEkeywords}
Analog arithmetic circuits,
Berlekamp codes,
Dot-product engines,
In situ computing,
Lee metric.
\end{IEEEkeywords}

\ifIEEE
   \Footnotetext{\quad}{\Grant}
\fi

\section{Introduction}
\label{sec:introduction}

We consider the following computational model.
For an integer $q \ge 2$, let~$\Sigma_q$ denote the subset
$\Int{q} = \{ 0, 1, \ldots, q{-}1 \}$ of
the integer set~$\Integers$.  Also, let~$\ell$ and~$n$ be fixed
positive integers. A \emph{dot-product engine}
(in short, DPE) is a device which accepts as input
an $\ell \times n$ matrix
$A = (a_{i,j})_{i \in \Int{\ell}, j \in \Int{n}}$ over~$\Sigma_q$
and a row vector $\bldu = (u_i)_{i \in \Int{\ell}} \in \Sigma_q^\ell$,
and computes the vector--matrix product $\bldc = \bldu A$, with
addition and multiplication carried out over~$\Integers$. Thus,
$\bldc = (c_j)_{j \in \Int{n}}$ is an integer vector in~$\Integers^n$
(more specifically, over $\Sigma_{\ell(q-1)^2+1}^n$).
In the applications of interest, the matrix~$A$ is modified
much less frequently than the input vector~$\bldu$
(in some applications, the matrix~$A$ is determined once
and then remains fixed, in which case only~$\bldu$ is seen as input).
Typically, the alphabet size\footnote{%
One can consider the broader problem where the matrix~$A$ and
the vector~$\bldu$ are over different integer alphabets.
Yet, for the sake of simplicity, we will assume hereafter that
those alphabets are the same.
It is primarily the alphabet of the matrix that will affect
the coding schemes that will be presented in this work.}
$q$ is a power of~$2$.

In recent proposals of nanoscale implementations of a DPE,
the matrix~$A$ is realized as a crossbar array consisting of $\ell$~row
conductors, $n$~columns conductors, and programmable nanoscale
resistors (e.g., memristors) at the junctions, with the resistor
at the junction~$(i,j)$ set to have conductance, $G_{i,j}$,
that is proportional to $a_{i,j}$.
Each entry~$u_i$ of~$\bldu$ is fed into a digital-to-analog converter
(DAC) to produce a voltage level that is proportional to~$u_i$.
The product, $\bldu A$, is then computed by reading the currents at
the (grounded) column conductors, after being fed into
analog-to-digital converters (ADCs);
see Figure~\ref{fig:crossbar}.
For early implementations and applications of DPE's,
as well as recent ones, see, for example,~%
\cite{BSBLJ},
\cite{HSKGDGLGWY},
\cite{KMML},
and~\cite{SNMBSHWS}.
\begin{figure}[bt]

\newlength{\figunit}
\setlength{\figunit}{0.23ex}

\newcommand{\BigBullet}{\thicklines
    \put(000,000){\circle{2.4}}
    \put(000,000){\circle{1.2}}
}

\newcommand{\Bullet}{\thicklines
    \put(000,000){\circle{1.6}}
    \put(000,000){\circle{0.8}}
}

\newcommand{\ground}{
    \put(-03,000){\line(1,0){6}}
    \put(-02,-01){\line(1,0){4}}
    \put(-01,-02){\line(1,0){2}}
}

\newcommand{\dmemristor}{
    \thicklines
    \put(000,000){
        \put(0.5,-.5){
            \put(-11,000){\line(1,-1){2}}
            \put(-09,-02){\line(1,1){1}}
            \put(-08,-01){\line(1,-1){2}}
            \put(-06,-03){\line(-1,-1){2}}
            \put(-08,-05){\line(1,-1){2}}
            \put(-06,-07){\line(1,1){2}}
            \put(-04,-05){\line(1,-1){2}}
            \put(-02,-07){\line(-1,-1){1}}
            \put(-03,-08){\line(1,-1){2}}
        }

    }
    \put(-11,000){\Bullet}
    \put(000,-011){\Bullet}
}

\newsavebox{\Bullets}
\sbox{\Bullets}{
    \setlength{\unitlength}{\figunit}
    \put(000,000){
        \multiput(-18,000)(0,18){6}{
            \put(000,000){\BigBullet}
        }

        \multiput(000,-18)(18,0){8}{
            \put(000,000){\BigBullet}
        }
    }
}

\newsavebox{\Grid}
\sbox{\Grid}{
    \setlength{\unitlength}{\figunit}
    \put(000,000){
        \multiput(000,-18)(18,0){8}{
            \put(000,000){\line(0,1){117}}
        }

    }

}

\setlength{\unitlength}{\figunit}
\begin{center}
\begin{picture}(209,191)(-70,-74)

\put(000,000){\usebox{\Grid}}
%
%
\put(-28,000){
        \multiput(000,000)(000,018){6}{
                {\thicklines
                    \multiput(000,-07)(-14,0){2}{\line(0,1){14}}
                    \multiput(000,-07)(0,14){2}{\line(-1,0){14}}
                    \put(-07,000){\makebox(0,0){\tiny DAC}}
                }
                \put(000,000){\vector(1,0){10}}
                \put(-24,000){\vector(1,0){10}}
        }
        \put(-30,000){
                \put(000,090){\makebox(0,0){$u_0$}}
                \put(000,072){\makebox(0,0){$u_1$}}
                \put(000,056.5){\makebox(0,0){$\vdots$}}
                \put(000,036){\makebox(0,0){$u_i$}}
                \put(000,020.5){\makebox(0,0){$\vdots$}}
                \put(-04,000){\makebox(0,0){$u_{\ell-1}$}}
        }
}
%
%
\put(000,-22){
        \multiput(000,000)(018,000){8}{
                {\thicklines
                    \put(000,-07){\circle{10}}
                    \put(000,-07){\makebox(0,0){\tiny A}}
                    \multiput(-07,-22)(0,-14){2}{\line(1,0){14}}
                    \multiput(-07,-22)(14,0){2}{\line(0,-1){14}}
                    \put(000,-29){\makebox(0,0){\tiny ADC}}
                }
                \put(005,-07){\line(1,0){3}}
                \put(008,-07){\line(0,-1){09}}
                \put(008,-16){\ground}
                \put(000,004){\line(0,-1){06}}
                \put(000,-12){\vector(0,-1){10}}
                \put(000,-36){\vector(0,-1){10}}
        }
        \put(000,-52){
                \put(000,000){\makebox(0,0){$c_0$}}
                \put(018,000){\makebox(0,0){$c_1$}}
                \put(045,000){\makebox(0,0){$\cdots$}}
                \put(072,000){\makebox(0,0){$c_j$}}
                \put(099,000){\makebox(0,0){$\cdots$}}
                \put(126,000){\makebox(0,0){$c_{n-1}$}}
        }
}

\multiput(-18,000)(000,018){6}{
    \put(000,000){\line(1,0){153}}
    \multiput(018,000)(018,000){8}{\dmemristor}
}

\put(064,025){\makebox(0,0){\tiny $G_{i,j}$}}
\put(000,000){\usebox{\Bullets}}

\end{picture}
\end{center}
\caption{Schematic diagram of a DPE implementation
of the computation $\bldu \mapsto \bldc = \bldu A$ using
an $\ell \times n$ crossbar array of memristors.
The conductance~$G_{i,j}$ of the memristor at each junction $(i,j)$
is proportional to $a_{i,j}$. The circles marked ``A''
represent analog current measuring devices
(such as transimpedance amplifiers). The current measurements can be
carried out in parallel (as shown), or serially, column-by-column,
using only one measuring device.}
\label{fig:crossbar}
\end{figure}
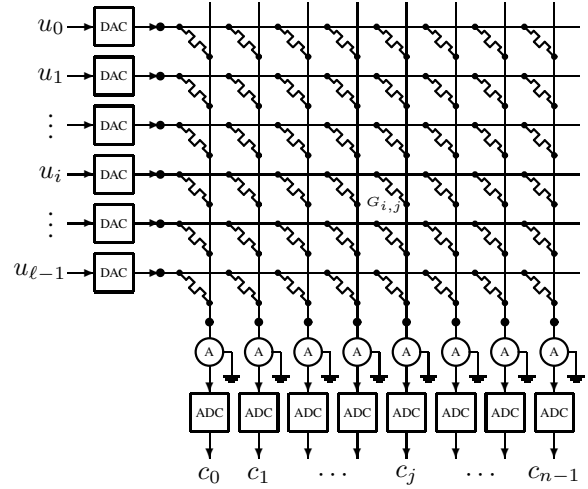

Inaccuracies while programming the resistors in the crossbar
and noise while reading the currents are examples of factors that
can affect the accuracy of the computation. Specifically,
the actually-read row vector,
$\bldy = (y_j)_{j \in \Int{n}} \in \Integers^n$, may differ from
the correct vector, $\bldc = \bldu A$.
The error vector is defined as the following
vector in~$\Integers^n$:
\[
\blde = (e_j)_{j \in \Int{n}} = \bldy - \bldu A \; .
\]
In such context of errors, we find it natural to define
the number of errors to be the $L_1$-norm of~$\blde$:
\[
\| \blde \| = \| \blde \|_1 = \sum_{j \in \Int{n}} |e_j| \; .
\]
In our case---where~$\blde$ is an integer---this norm is also
referred to as the Manhattan weight of~$\blde$, which equals
the Manhattan distance between $\bldu A$ and~$\bldy$.

Another source of computational errors could be
junctions in the crossbar becoming shorted
due to faults in the programming process\footnote{%
Shorts could also result from manufacturing defects, although
conceivably these can be detected before the DPE is put into
operation.}.
In this case, the current read in the shorted column will be above
some compliance limit (``$\infty$''), which, in turn, will flag
the respective entry in~$\bldy$ as ``unavailable''
or as an \emph{erasure}.
The $L_1$-metric has been studied quite extensively
in the coding literature,
along with its finite-field Lee-metric variant:
see~\cite[Ch.~9]{Berlekamp} and~\cite[Ch.~10]{Roth})
(and Subsection~\ref{sec:Lee} below).

In the other extreme, a junction in the array may become
non-programmable or get stuck at an open state, in which cases
the respective entry in~$\bldy$ could be off the correct value by
as much as $\pm (q{-}1)^2$. Such errors could be
counted through their contribution to the $L_1$-norm
of the error vector. Alternatively, if this type of errors
is predominant, one could consider the Hamming metric instead,
whereby the figure of merit is
the Hamming weight of~$\blde$, equaling the number of positions
in which~$\bldy$ differs from $\bldu A$ (disregarding the extent
at which the values of the respective entries actually differ\footnote{%
Yet we will also consider a more general setting,
where that difference is bounded by some prescribed constant.}).
This Hamming metric is suitable for handling erasures as well.

In this work, we propose methods for using the DPE computational
power to self-protect the computations against errors.
The first $k \; (< n)$ entries in $\bldc = \bldu A$ will carry
the (ordinary) result of the computation of interest, while
the remaining $n-k$ entries of~$\bldc$ will contain
redundancy symbols, which can be used to detect or
correct computational errors, assuming that the number of the latter
(counted with respect to either the $L_1$-metric or
the Hamming metric) is bounded from above
by some design parameter. Specifically,
the programmed $\ell \times n$ matrix $A$ will have the structure
\[
A = \left( A' \;|\; A'' \right) \; ,
\]
where~$A'$ is an $\ell \times k$ matrix over~$\Sigma_q$ consisting of
the first $k$~columns of~$A$, and~$A''$ consists of
the remaining $n-k$~columns;
the computed output row vector for an input vector
$\bldu \in \Sigma_q^\ell$ will then be
$\bldc = \left( \bldc' \;|\; \bldc'' \right)$,
where the $k$-prefix $\bldc' = \bldu A' \; (\in \Integers^k)$ represents
the target computation while the $(n{-}k)$-suffix
$\bldc'' = \bldu A'' \; (\in \Integers^{n-k})$ is
the redundancy part.
In this setting, $A'$ and~$\bldu$ are the actual inputs, and~$A''$
will need to be computed from~$A'$, e.g., by a dedicated circuitry,
prior to---or while---programming~$A'$ and~$A''$ into
the crossbar array
(yet recall that it is expected that~$A'$ will be modified much less
frequently than~$\bldu$).
The error decoding mechanism will be implemented by
dedicated circuitry too.
Clearly, we will aim at minimizing $n-k$ given
the designed error correction capability.

\begin{example}
\label{ex:paritycode}
Let us consider the simplest case where we would
like to be able to only detect one $L_1$-metric error.
In this case, we select
$n = k+1$ and let the $\ell \times 1$ matrix
$A'' = (a_{i,k})_{i \in \Int{\ell}}$ be obtained from
$A' = (a_{i,j})_{i \in \Int{\ell},j \in \Int{k}}$ by
\[
a_{i,k} = \Bigl( \sum_{j \in \Int{k}} a_{i,j} \Bigr) \; \Mod \; 2 \; ,
\quad i \in \Int{\ell} \; ,
\]
where ``$\Mod$'' stands for (the binary operation of) remaindering;
thus, the entries of~$A''$ are in fact over~$\Sigma_2$,
and the sum of entries along each row of~$A$ is even.
It follows by linearity that the sum of entries
of (an error-free) $\bldc = (c_j)_{j \in \Int{n}} = \bldu A$
must be even.
On the other hand, if $\blde \in \Integers^n$ is an error vector
with $\| \blde \| = 1$
then the sum of entries of $\bldy = \bldc + \blde$ will be odd.\qed
\end{example}

Observe that the contents of~$A''$ depends on~$A'$, but
should \emph{not} depend on~$\bldu$. In particular, $A''$
should be set so that the specified error correction--detection
capabilities hold when~$\bldu$ is taken to be a unit vector.
Thus, for every row index~$i$,
the set of (at least) $q^k$ possible contents of row~$i$ in~$A$
must form a subset of~$\Sigma_q^n$
that, by itself (and independently of the contents of the other rows
in~$A$), meets the correction--detection capabilities.

Secondly, note that a given computed $k$-prefix $\bldc' = \bldu A'$
can be associated with different $(n{-}k)$-suffixes
(redundancy symbols) $\bldc'' = \bldu A''$, depending on~$\bldu$.
This is different from
the common coding theory setting, where the redundancy symbols
are uniquely determined by the information symbols\footnote{%
Moreover, while systematic encoding is a matter of preference
in ordinary coding applications, in our setting it is actually
a necessity:
the benefits of using the DPE would diminish if post-processing
of its output were required even when the output were error-free.}
(the latter being the counterparts of the entries of~$\bldc'$
in our setting).
For instance, if~$A$ in Example~\ref{ex:paritycode} is
\[
A =
\left(
\begin{array}{ccc|c}
1 & 0 & 0 & 1 \\
0 & 1 & 0 & 1 \\
1 & 1 & 0 & 0
\end{array}
\right)
\]
(where $q = 2$, $k = 3$, and $n = 4$), then,
for $\bldu = (0 \; 0 \; 1)$,
\[
(0 \; 0 \; 1) \, A =  (1 \; 1 \; 0 \; 0)
\]
while for $\bldu = (1 \; 1 \; 0)$,
\[
(1 \; 1 \; 0) \, A =  (1 \; 1 \; 0 \; 2)
\]
(in both cases, $\bldc' = (1 \; 1 \; 0)$).
Indeed, we will see in the sequel some coding schemes
where we will be able to recover~$\bldc'$ correctly out of~$\bldy$
(which will suffice for our purposes), yet we will not necessarily
recover~$\bldc''$. This means that we will need to present
the error correction--detection specification of a DPE coding scheme
slightly differently than usual;
we do this in Section~\ref{sec:definitions} below.

In Section~\ref{sec:oneerror}, we present methods for
single-error correction and double-error detection
in the $L_1$-metric. Methods for multiple-error correction
for that metric are then discussed in Section~\ref{sec:multipleerrors}.
Finally, the Hamming metric is considered in Section~\ref{sec:Hamming}.
We will mainly focus on a regime where the number~$\tau$
of correctable errors is fixed (i.e., small) while~$n$ grows.
Under these conditions, the required redundancy, $n-k$,
of our methods will be of
the order of $\tau \cdot \log_q n$ in the case of the $L_1$-metric,
and approximately twice that number in the case of the Hamming metric.
Moreover, both the encoding and decoding can be efficiently implemented;
in particular, the decoding requires a number of
integer (or finite field) arithmetic operations which is proportional
to $\tau n$ (and the implementation can be parallelized to
a latency proportional to~$\tau$),
where the operands are of the order of $\log_2 n$ bits long.

\section{Definitions}
\label{sec:definitions}

For integer vectors~$\bldx_1$ and~$\bldx_2$ of the same length,
we denote by $\distance_\Lee(\bldx_1,\bldx_2)$ the $L_1$-distance
between them, namely,
$\distance_\Lee(\bldx_1,\bldx_2) = \| \bldx_1 - \bldx_2 \|$.
The \emph{Manhattan sphere} of radius~$t$ centered at
$\bldy \in \Integers^n$ is defined as the set of all vectors
in~$\Integers^n$ at $L_1$-distance at most~$t$ from~$\bldy$:
\[
\Sphere_\Lee(\bldy,t) =
\left\{
\bldx  \in \Integers^n \;:\; \distance_\Lee(\bldx,\bldy) \le t
\right\} \; .
\]
The \emph{volume} (size) of $\Sphere_\Lee(\bldy,t)$ is known
to be~\cite{GW1}, \cite{GW2}:
\begin{equation}
\label{eq:volumeLee}
\Volume_\Lee(n,t) = \sum_{i=0}^{\min \{ t, n \}}
2^i {n \choose i} {t \choose i}
\; .
\end{equation}
In particular, $\Volume_\Lee(n,1) = 2n+1$, and for any fixed~$t$
and sufficiently large~$n$ we have
$\Volume_\Lee(n,t) = O(n^t)$, where the hidden constant depends on~$t$.

Turning to the Hamming metric,
we denote by $\distance_\Hamming(\bldx_1,\bldx_2)$
the Hamming distance between~$\bldx_1$ and~$\bldx_2$,
and the Hamming sphere of radius~$t$ centered at
$\bldy \in \Integers^n$ is defined by
\[
\Sphere_\Hamming(\bldy,t) =
\left\{
\bldx  \in \Integers^n \;:\; \distance_\Hamming(\bldx,\bldy) \le t
\right\}
\]
(which has infinite size when $t > 0$).
In what follows, we will sometimes omit the identifier
``$\Lee$'' or ``$\Hamming$'' from $\distance(\cdot,\cdot)$
and $\Sphere(\cdot,\cdot)$, if the text applies to both metrics.

Given $\Sigma_q$ and positive integers~$\ell$, $n$, and $k < n$,
a DPE \emph{coding scheme} is a pair $(\encoder,\decoder)$, where
\begin{itemize}
\item
$\encoder : \Sigma_q^{\ell \times k} \rightarrow
\Sigma_q^{\ell \times n}$
is an \emph{encoding mapping}
such that for every $A' \in \Sigma_q^{\ell \times k}$,
the image $A = \encoder(A')$ has the form
$( A' \;|\; A'' )$  for some $A'' \in \Sigma_q^{\ell \times (n-k)}$.
The set
\[
\code =
\left\{
\bldu \, \encoder(A') \;:\;
A' \in \Sigma_q^{\ell \times k},  \; \bldu \in \Sigma_q^\ell
\right\}
\]
is the \emph{code induced by~$\encoder$} and its members
are called \emph{codewords}.
Thus, $\code \subseteq \Sigma_Q^n$, where $Q = \ell(q{-}1)^2 + 1$.
\item
$\decoder : \Sigma_Q^n \rightarrow \Sigma_Q^k \cup \{ \failure \}$ is
a \emph{decoding mapping}
(the return value~$\failure$ will designate a decoding failure).
\end{itemize}

Note that in the above definition,
the decoding mapping~$\decoder$ is not
a function of~$A'$ (yet one could consider also a different setting
where~$A'$ is known to the decoder).

Borrowing (somewhat loosely) classical coding terms, we will refer
to~$n$ and~$k$ as the \emph{length} and \emph{dimension}, respectively,
of the coding scheme. In the context of a given coding scheme,
the $k$-prefix (respectively, $(n{-}k)$-suffix)
of a vector $\bldx \in \Integers^n$ will be denoted hereafter by
$\bldx'$ (respectively, $\bldx''$).
This notational convention extends
to $\ell \times n$ matrices over~$\Integers$,
with~$A'$ (respectively, $A''$) standing for the sub-matrix
consisting of the first $k$~columns
(respectively, last $n-k$ columns)
of an $\ell \times n$ matrix~$A$ over~$\Integers$.
Denoting row~$i$ of a matrix~$X$ by~$X_i$,
we then have $(A')_i = (A_i)'$ and $(A'')_i = (A_i)''$, for
every $i \in \Int{\ell}$.

Given nonnegative integers~$\tau$ and~$\sigma$,
a coding scheme $(\encoder,\decoder)$ is said to
\emph{correct~$\tau$ errors and detect $\tau + \sigma$ errors}
(in the $L_1$-metric or the Hamming metric, depending
on the context) if the following conditions hold for every
computed vector $\bldc = \bldu A \in \code$
and the respective read vector\footnote{%
It is assumed hereafter that the entries of the received vector remain
in the same alphabet, $\Sigma_Q$, as of the computed vector;
while errors could push the entries to outside that range,
they can always be coerced back into~$\Sigma_Q$.}
$\bldy \in \Sigma_Q^n$.
\begin{itemize}
\item
\emph{(Correction condition)}
If $\distance(\bldy,\bldc) \le \tau$, then
$\decoder(\bldy) = \bldc'$.
\item
\emph{(Detection condition)}
Otherwise, if $\distance(\bldy,\bldc) \le \tau + \sigma$, then
$\decoder(\bldy) \in \{ \bldc', \failure \}$.
\end{itemize}
That is, if the number of errors is~$\tau$ or less,
then the decoder must produce the correct result of
the target computation; otherwise, if the number of errors is
$\tau + \sigma$ or less, the decoder can flag decoding failure
instead (but it cannot produce an incorrect result).

So, unlike the respective conditions for ordinary codes,
the sphere $\Sphere(\bldy,\tau)$ may contain multiple codewords
of~$\code$, yet they all must agree on their $k$-prefixes.
Similarly, the sets $\Sphere(\bldy,\tau)$
and $\Sphere(\bldy,\tau{+}\sigma) \setminus \Sphere(\bldy,\tau)$
may both contain codewords of~$\code$, yet these codewords
must agree on their $k$-prefixes.

By properly defining the \emph{minimum distance} of~$\code$,
we can extend to our setting the well known relationship
between minimum distance and correction capability.
Specifically,
the minimum distance of~$\code$, denoted $\distance(\code)$
(with an identifier of the particular metric used),
is defined as the smallest distance between any two codewords
in~$\code$ having distinct $k$-prefixes:
\[
\distance(\code) =
\min_{\bldc_1, \bldc_2 \in \code : \atop \bldc_1' \ne \bldc_2'}
\distance(\bldc_1, \bldc_2) \; .
\]
The following result then extends from the ordinary coding
setting~\cite[p.~14, Prop.~1.5]{Roth}
(for completeness, we include a proof
in Appendix~\ref{sec:proofs}).

\begin{proposition}
\label{prop:distance}
Let $\encoder : \Sigma_q^{\ell \times k} \rightarrow
\Sigma_q^{\ell \times n}$ be an encoding mapping
with an induced code~$\code$, and let~$\tau$ and~$\sigma$
be nonnegative integers such that
\[
2 \tau + \sigma < \distance(\code) \; .
\]
Then there exists a decoding mapping
$\decoder : \Sigma_Q^n \rightarrow \Sigma_Q^k \cup \{ \failure \}$
such that the coding scheme $(\encoder,\decoder)$
can correct~$\tau$ errors and detect $\tau + \sigma$ errors.
\end{proposition}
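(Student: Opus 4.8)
The plan is to exhibit an explicit decoder of nearest-prefix type and then verify the two required conditions directly, mirroring the classical minimum-distance argument but with ordinary distance replaced throughout by the prefix-sensitive quantity $\distance(\code)$. Concretely, I would define $\decoder$ on an arbitrary $\bldy \in \Sigma_Q^n$ as follows: consider the set $\code \cap \Sphere(\bldy,\tau)$ of codewords lying within distance~$\tau$ of~$\bldy$; if this set is empty, set $\decoder(\bldy) = \failure$, and otherwise set $\decoder(\bldy)$ equal to the common $k$-prefix of its members. Observe that this decoder depends only on~$\code$ (hence only on~$\encoder$) and not on any particular~$A'$, as the definition of a coding scheme requires, and that it is a genuine function on all of $\Sigma_Q^n$, including inputs that never arise as a legitimate read vector.

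The first and central step is to justify that~$\decoder$ is well-defined, i.e., that whenever $\code \cap \Sphere(\bldy,\tau)$ is nonempty all of its members share a single $k$-prefix. Here is where the hypothesis and the definition of $\distance(\code)$ enter. If $\bldc_1, \bldc_2 \in \code$ both satisfy $\distance(\bldy,\bldc_i) \le \tau$, then the triangle inequality (valid in both metrics) gives $\distance(\bldc_1,\bldc_2) \le 2\tau \le 2\tau + \sigma < \distance(\code)$. Since $\distance(\code)$ is by definition the minimum distance taken over pairs of codewords with \emph{distinct} prefixes, a pair at distance strictly below $\distance(\code)$ cannot have distinct prefixes; hence $\bldc_1' = \bldc_2'$, and the common prefix is unambiguous.

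The second step dispatches both operational conditions. For correction, suppose the computed codeword $\bldc = \bldu A$ satisfies $\distance(\bldy,\bldc) \le \tau$. Then $\bldc \in \code \cap \Sphere(\bldy,\tau)$, so this set is nonempty and its common prefix, by the previous step, equals $\bldc'$; thus $\decoder(\bldy) = \bldc'$. For detection, suppose instead $\tau < \distance(\bldy,\bldc) \le \tau + \sigma$. If $\code \cap \Sphere(\bldy,\tau)$ is empty then $\decoder(\bldy) = \failure$, which is permitted. Otherwise let $\bldc_1$ lie in $\code \cap \Sphere(\bldy,\tau)$, so $\decoder(\bldy) = \bldc_1'$; the triangle inequality then yields $\distance(\bldc,\bldc_1) \le \distance(\bldy,\bldc) + \distance(\bldy,\bldc_1) \le (\tau+\sigma) + \tau = 2\tau + \sigma < \distance(\code)$, and the definition of $\distance(\code)$ again forces $\bldc_1' = \bldc'$. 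Hence $\decoder(\bldy) \in \{\bldc',\failure\}$ in every case.

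I do not anticipate a genuine obstacle: the argument is a routine adaptation of the textbook result cited from~\cite{Roth}. The one point demanding care is that all ``closeness'' must be measured against the prefix-sensitive minimum distance, so at each invocation of the triangle inequality one must check that the conclusion needed is equality of prefixes (for which a sub-$\distance(\code)$ distance bound suffices) rather than a numerical distance bound; a pair of codewords sharing a prefix carries no lower bound on its distance, but that is precisely the case in which nothing further need be proved. Keeping this bookkeeping straight is the whole content of the adaptation.
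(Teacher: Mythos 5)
Your proposal is correct and follows essentially the same route as the paper: you define the identical decoder (return the common $k$-prefix of any codeword within distance~$\tau$ of~$\bldy$, else~$\failure$) and justify it with the same triangle-inequality argument showing that codewords within distances~$\tau$ and~$\tau+\sigma$ of~$\bldy$ must share their $k$-prefix. The only difference is presentational: you separate well-definedness (both codewords within~$\tau$) from the detection condition (one within~$\tau$, one within~$\tau+\sigma$), whereas the paper handles both at once with the single bound $2\tau+\sigma$.
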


For the special case of the Hamming metric,
Proposition~\ref{prop:distance} can be generalized to handle
erasures as well
(see~\cite[p.~16, Prop.~1.7]{Roth} and Appendix~\ref{sec:proofs}).

\begin{proposition}
\label{prop:distanceHamming}
With~$\encoder$ and~$\code$ as in Proposition~\ref{prop:distance},
let~$\tau$, $\sigma$, and~$\rho$
be nonnegative integers such that
\[
2 \tau + \sigma + \rho < \distance_\Hamming(\code) \; .
\]
Then there exists a decoding mapping
$\decoder : \Sigma_Q^n \rightarrow \Sigma_Q^k \cup \{ \failure \}$
such that the coding scheme $(\encoder,\decoder)$
can correct~$\tau$ errors and detect $\tau + \sigma$ errors,
in the presence of up to~$\rho$ erasures.
\end{proposition}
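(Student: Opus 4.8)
The plan is to reproduce the argument behind Proposition~\ref{prop:distance}, with all Hamming distances now measured only over the coordinates that have \emph{not} been erased. Write $d = \distance_\Hamming(\code)$, and let $E \subseteq \Int{n}$, with $|E| \le \rho$, be the (decoder-known) set of erased positions; for $\bldx \in \Integers^n$ let $\bar\bldx$ denote its restriction to the $n - |E|$ surviving coordinates $\Int{n} \setminus E$. The number of errors is then $\distance_\Hamming(\bar\bldy, \bar\bldc)$, where $\bldc = \bldu A$, while the erased coordinates are accounted for separately through~$\rho$. The single fact driving everything is that deleting the $|E| \le \rho$ erased coordinates can decrease a Hamming distance by at most $|E|$: for any two codewords $\bldc_1, \bldc_2 \in \code$ with $\bldc_1' \ne \bldc_2'$,
\[
\distance_\Hamming(\bar\bldc_1, \bar\bldc_2)
\;\ge\;
\distance_\Hamming(\bldc_1, \bldc_2) - |E|
\;\ge\;
d - \rho \; .
\]
Thus, on the surviving coordinates the induced code still separates distinct $k$-prefixes by Hamming distance at least $d - \rho$, and the hypothesis $2\tau + \sigma + \rho < d$ becomes precisely $2\tau + \sigma < d - \rho$ (so, in particular, $2\tau < d - \rho$).

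With this in hand I would define the decoder exactly as for Proposition~\ref{prop:distance}, but on the punctured data: on input $\bldy$ (together with~$E$), if there is a codeword $\bldc \in \code$ with $\distance_\Hamming(\bar\bldc, \bar\bldy) \le \tau$, output its $k$-prefix $\bldc'$; otherwise return $\failure$. Note that the output is the \emph{inferred} prefix of a codeword, hence well defined even when some of the first $k$ coordinates lie in~$E$. The verification is then the familiar triangle-inequality argument with $d$ replaced by $d - \rho$. For the correction condition, if $\distance_\Hamming(\bar\bldy, \bar\bldc) \le \tau$ then $\bldc$ is a candidate, and any other candidate $\bldc_2$ satisfies $\distance_\Hamming(\bar\bldc, \bar\bldc_2) \le 2\tau < d - \rho$, so the displayed inequality forces $\bldc_2' = \bldc'$; the decoder therefore outputs the correct prefix. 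For the detection condition, if $\tau < \distance_\Hamming(\bar\bldy, \bar\bldc) \le \tau + \sigma$, then any candidate $\bldc_2$ (being within $\tau$ of $\bar\bldy$) obeys $\distance_\Hamming(\bar\bldc, \bar\bldc_2) \le 2\tau + \sigma < d - \rho$, again forcing $\bldc_2' = \bldc'$; hence the decoder returns either $\bldc'$ or $\failure$, never an incorrect prefix.

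The step I expect to require the most care is the interface with the formal decoder signature of Section~\ref{sec:definitions}, where $\decoder$ acts on $\Sigma_Q^n$ alone, whereas the erasure model presumes the locations in~$E$ are available to the decoder. I would resolve this by admitting an erasure symbol into the decoder's input alphabet (equivalently, supplying~$E$ as side information), which is the standard convention for combined error-and-erasure decoding and affects neither the counting of errors nor the minimum-distance definition. Apart from this bookkeeping, the only genuine content is the puncturing inequality above; everything else is a transcription of the proof of Proposition~\ref{prop:distance}, and indeed setting $\rho = 0$ recovers that proposition verbatim.
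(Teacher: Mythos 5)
Your proof is correct and follows essentially the same route as the paper: puncture at the erased positions, observe that deleting at most $\rho$ coordinates lowers the prefix-distinguishing Hamming distance by at most $\rho$, and then run the Proposition~\ref{prop:distance} argument with $\distance_\Hamming(\code)-\rho$ in place of $\distance(\code)$. The paper states this in two sentences and cites Proposition~\ref{prop:distance} as a black box, whereas you inline its triangle-inequality verification and additionally handle the bookkeeping of how the decoder learns the erasure set; both are faithful elaborations, not deviations.
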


The coding schemes that we present in upcoming sections are based
on known codes, in particular known schemes for
the Lee and Manhattan metrics---primarily
Berlekamp codes~\cite[Ch.~9]{Berlekamp}, \cite[Ch.~10]{Roth}.
Yet certain adaptations are needed due to the fact that the computation
of the redundancy symbols of the codewords in the induced code
$\code = \{ \bldc = \bldu \, \encoder(A') \}$ has to be done
only through the computation of $A' \mapsto \encoder(A')$
(which is independent of~$\bldu$). Moreover,
the alphabet, $\Sigma_q$, of the entries of $\encoder(A')$ is smaller
than the alphabet, $\Sigma_Q$, of the codewords in~$\code$.
Our coding schemes will be \emph{separable}, in the sense
that for each row index $i \in \Int{\ell}$,
the contents $(\encoder(A'))_i$ will only be a function of $A_i'$
(and not of the rest of the rows in $A'$); in fact, that function
will be the same for all~$i$, and will not depend on~$\ell$.
It is expected, however, that the designed number of correctable errors,
$\tau$, will tend to increase with~$\ell$.

\section{Single error correction in the $L_1$-metric}
\label{sec:oneerror}

In this section, we describe a DPE coding scheme,
$(\encoder_1,\decoder_1)$, for correcting one $L_1$-metric error;
this scheme will then be extended
(in Subsection~\ref{sec:doubleerrordetection})
to also detect two errors.

\subsection{The coding scheme}
\label{sec:singleerror}

Given an alphabet size $q \ge 2$ and a code length~$n$,
we let $m = \lceil \log_q (2n+1) \rceil$ and $k = n-m$
(thus, $m$ will be the redundancy).
Let
\[
\bldalpha =
\left( \alpha_0 \; \alpha_1 \; \ldots \; \alpha_{n-1} \right)
\]
be a vector in~$\Integers^n$ that satisfies the following properties.
\begin{list}{}{\settowidth{\labelwidth}{\textit{iii}}}
\item[(i)]
The entries of~$\bldalpha$ are nonzero distinct
elements in $\Int{2n{+}1}$.
\item[(ii)]
For any two indexes $i, j \in \Int{n}$,
\[
\alpha_i + \alpha_j \ne 2n+1 \; .
\]
\item[(iii)]
$\alpha_{k+j} = q^j$, for $j \in \Int{m}$.
\end{list}

We will refer to the entries of~$\bldalpha$ as \emph{code locators}.
Code locators that satisfy conditions (i)--(iii)
can be easily constructed for every $q \ge 2$: e.g.,
when $q^{m-1} \le n$, we can take
\[
\left\{ \alpha_j \right\}_{j \in \Int{n}} =
\{ 1, 2, 3, \ldots, n \} \; ,
\]
otherwise,
\[
\left\{ \alpha_j \right\}_{j \in \Int{n}} =
(\{ 1, 2, 3, \ldots, n \} \setminus \{ 2n{+}1{-}q^{m-1} \})
\cup \{ q^{m-1} \}
\]
(note that $q^{m-1} < 2n+1$ and that $q^{m-2} \le n$,
yet $q^{m-1}$ may be larger than~$n$; in fact, this will always be
the case when $q = 2$).

The encoding mapping
$\encoder_1 :
\Sigma_q^{\ell \times k} \rightarrow \Sigma_q^{\ell \times n}$
is defined as follows: for every
$A' = (a_{i,j})_{i \in \Int{\ell}, j \in \Int{k}}$,
the last $m$~columns in
$A = ( A' \;|\; A'' ) = \encoder_1(A')$ are set so that
\begin{eqnarray}
\lefteqn{
\sum_{j \in \Int{m}} a_{i,k+j}
\underbrace{\alpha_{k+j}}_{q^j}
} \makebox[2ex]{} \nonumber \\
\label{eq:oneerrorencoder}
& = &
\Bigl( -\sum_{j \in \Int{k}} a_{i,j} \alpha_j \Bigr)
\; \Mod \; (2n+1) \; ,
\quad i \in \Int{\ell} \; ,
\end{eqnarray}
where the remainder (the result of the ``$\Mod$'' operation) is taken
to be in $\Int{2n{+}1}$. Simply put,
$a_{i,n-1} a_{i,n-2} \ldots a_{i,k+1} a_{i,k}$ is
the representation to base~$q$ (from the most-significant digit
down to the least) of the remainder in $\Int{2n{+}1}$
of the (nonpositive) integer $-\sum_{j \in \Int{k}} a_{i,j} \alpha_j$,
when divided by~$2n+1$.

It follows from~(\ref{eq:oneerrorencoder}) that each
$A = \encoder_1(A')$ satisfies
\[
A \bldalpha^\transpose \equiv \bldzero \quad (\mod \; (2n+1)) \; ,
\]
where~$(\cdot)^\transpose$ denotes transposition and
the congruence holds component-wise. Hence, for each
codeword $\bldc = \bldu A$ in the induced code~$\code$ we have
\begin{equation}
\label{eq:oneerror}
\bldc \cdot \bldalpha^\transpose \equiv
\bldu A \bldalpha^\transpose \equiv 0 \quad (\mod \; (2n+1)) \; .
\end{equation}
This, in turn, implies that for every two distinct codewords
$\bldc_1, \bldc_2 \in \code$,
\[
(\bldc_1 - \bldc_2) \cdot \bldalpha^\transpose \equiv 0
\quad (\mod \; (2n+1)) \; ,
\]
and, therefore, by conditions~(i)--(ii) we get that
$\distance_\Lee(\bldc_1,\bldc_2) = \| \bldc_1 - \bldc_2 \| > 2$, namely,
that $\distance_\Lee(\code) \ge 3$. We conclude from
Proposition~\ref{prop:distance} that when using
the encoding mapping defined by~(\ref{eq:oneerrorencoder}),
to map $A'$ into $A = \encoder_1(A')$,
we should be able to correct one error; alternatively, we should be able
to detect two errors. We demonstrate next
a single-error-correcting decoding mapping.

Let $\bldy = (y_j)_{j \in \Int{n}} = \bldc + \blde = \bldu A + \blde$ be
the read vector at the output of the DPE, where
$\blde \in \Integers^n$ is an error vector having at most
one nonzero entry, equaling~$\pm 1$.
The decoding will start by computing
the syndrome of~$\bldy$, which is defined by
\begin{eqnarray*}
s
& = &
\left( \bldy \cdot \bldalpha^\transpose \right) \; \Mod \; (2n+1) \\
& = &
\Bigl( \sum_{j \in \Int{n}} y_j \alpha_j \Bigr)
\; \Mod \; (2n+1) \; .
\end{eqnarray*}
We then have,
\[
s \equiv
\bldu A \bldalpha^\transpose + \blde \cdot \bldalpha^\transpose
\stackrel{(\ref{eq:oneerror})}{\equiv}
\blde \cdot \bldalpha^\transpose \quad (\mod \; (2n+1)) \; .
\]
It follows that $s = 0$ when $\blde = \bldzero$;
otherwise, if~$\blde$ has~$\pm 1$ at position~$j$
(and is zero otherwise), then
\[
s \equiv \pm \alpha_j \quad (\mod \; (2n+1)) \; .
\]
Hence, due to conditions~(i)--(ii), the syndrome~$s$ identifies
the error location~$j$ and the error sign uniquely.

The encoding and decoding procedures for
our single-error correction scheme
are summarized in Figures~\ref{fig:oneerrorencoder}
and~\ref{fig:oneerrordecoder}.

\begin{figure}[hbt]
\centering
\begin{myalgorithm}
\vspace{-1ex}

\quad
\textbf{Input:}
$\ell \times k$ matrix
$A' = (a_{i,j})_{i \in \Int{\ell}, j \in \Int{k}}$ over~$\Sigma_q$.

\quad
\textbf{Output:}
$\ell \times n$ matrix
$(A' \;|\; A'') = (a_{i,j})_{i \in \Int{\ell}, j \in \Int{n}}$
over~$\Sigma_q$.

\quad
$//$
$m = \lceil \log_q (2n+1) \rceil$, $k = n-m$.

\quad
$//$
$\bldalpha$ satisfies conditions (i)--(iii).

\begin{clause}
\item
For all $i \in \Int{\ell}$ do $\{$
\begin{clause}
\item
Set $(a_{i,n-1} \; a_{i,n-2} \; \ldots \; a_{i,k+1} \; a_{i,k})$
to satisfy Eq.~(\ref{eq:oneerrorencoder}).
\end{clause}
$\}$
\end{clause}

\end{myalgorithm}
\caption{Encoding mapping
$\encoder_1 : A' \mapsto (A' \;|\; A'')$
for single-error correction (or double-error detection).}
\label{fig:oneerrorencoder}
\end{figure}

\begin{figure}[hbt]
\centering
\begin{myalgorithm}
\vspace{-1ex}

\quad \textbf{Input:}
$\bldy = (\bldy' \;|\; \bldy'') \in \Sigma_q^n$.

\quad \textbf{Output:}
$\bldw = (w_j)_{j \in \Int{k}} \in \Sigma_q^k$,
or~$\failure$ (decoding failure).

\quad $//$
Parameters are as defined in Figure~\protect\ref{fig:oneerrorencoder}.

\begin{clause}
\item
Let $\bldw \leftarrow \bldy'$;

\item
Let
\[
s \leftarrow
\left( \bldy \cdot \bldalpha^\transpose \right) \; \Mod \; (2n+1) \; ;
\]

\item
If $s = 0$ then $\{$
\begin{clause}
\item
;
\quad $//$
$\bldy$ is error-free
\end{clause}
$\}$

\item
Else if $s = \alpha_j$ for some~$j \in \Int{n}$, then $\{$
\begin{clause}
\item
If $j \in \Int{k}$ then let $w_j \leftarrow w_j + 1$;
\end{clause}
$\}$
\item
Else if $s = 2n{+}1{-}\alpha_j$
for some~$j \in \Int{n}$, then $\{$
\begin{clause}
\item
If $j \in \Int{k}$ then let $w_j \leftarrow w_j - 1$;
\end{clause}
$\}$
\item
Else $\{$
\begin{clause}
\item
Return $\failure$.
\end{clause}
$\}$
\end{clause}

\end{myalgorithm}
\caption{Decoding mapping $\decoder_1 : \bldy \mapsto \bldw$
for single-error correction.}
\label{fig:oneerrordecoder}
\end{figure}

\begin{example}
\label{ex:oneerrorencoder}
Let $q = 2$ and $n = 15$, in which case $m = 5$ and $k = 10$.
Select $\ell = 3$ and
\[
\bldalpha =
\left( \, 3 \; 5 \; 6 \; 7 \; 9 \; 10 \; 11 \; 12 \; 13 \; 14 \bigm|
1 \; 2 \; 4 \; 8 \; 16 \,
\right)
\]
(which satisfies conditions~(i)--(iii)). Suppose that
$A'$ is the following $3 \times 10$ matrix:
\[
A' =
\left(
\arraycolsep0.2ex
\begin{array}{cccccccccc}
  1 & 0 & 1 & 1 & 0 & 1 & 0 & 0 & 1 & 0 \\
  0 & 0 & 0 & 1 & 0 & 1 & 1 & 0 & 0 & 1 \\
  0 & 1 & 0 & 0 & 0 & 1 & 0 & 1 & 1 & 1
\end{array}
\right)
\; .
\]
For $i = 0, 1, 2$, the values at
the right-hand side of~(\ref{eq:oneerrorencoder})
are given by
\begin{eqnarray*}
-(3 + 6 + 7 + 10 + 13) \; \Mod \; 31 & = & 23 \\
-(7 + 10 + 11 + 14) \; \Mod \; 31 & = & 20 \\
-(5 + 10 + 12 + 13 + 14) \; \Mod \; 31 & = & 8 \; ,
\end{eqnarray*}
and, so,
\[
A = \encoder_1(A') =
\left(
\arraycolsep0.2ex
\begin{array}{cccccccccc@{\;}|@{\;}ccccc}
  1 & 0 & 1 & 1 & 0 & 1 & 0 & 0 & 1 & 0 & 1 & 1 & 1 & 0 & 1 \\
  0 & 0 & 0 & 1 & 0 & 1 & 1 & 0 & 0 & 1 & 0 & 0 & 1 & 0 & 1 \\
  0 & 1 & 0 & 0 & 0 & 1 & 0 & 1 & 1 & 1 & 0 & 0 & 0 & 1 & 0
\end{array}
\right)
\; .
\]
For $\bldu = ( 1 \; 1 \; 1 )$, we get
\[
\bldc = \bldu A =
\left( \, 1\; 1\; 1\; 2\; 0\; 3\; 1\; 1\; 2\; 2 \bigm| 1\; 1\; 2\; 1\; 2
\, \right) \; .
\]
Suppose that the read vector is
\[
\bldy =
\left( \,
1\; 1\; 1\; 2\; 0\; 2\; 1\; 1\; 2\; 2 \bigm| 1\; 1\; 2\; 1\; 2
\, \right) \; .
\]
The syndrome of~$\bldy$ is given by
\begin{eqnarray*}
s & = &
\left( \bldy \cdot \bldalpha^\transpose \right) \; \Mod \; 31 \\
& = &
(3 {+} 5{+}6{+}2{\cdot}7{+}0{\cdot}9{+}2{\cdot}10{+}11{+}12 \\
&&
\quad
{+}2{\cdot}13
{+}2{\cdot}14{+}1{+}2{+}2{\cdot}4{+}8{+}2{\cdot}16) \; \Mod \; 31 \\
& = &
21 \; .
\end{eqnarray*}
Namely, $s = 31-21 = 10 = \alpha_5$, indicating that the error
location is $j = 5$ (the sixth entry) and the error value is~$-1$
(corresponding to changing the value~$3$ into~$2$).\qed
\end{example}

We end this subsection by demonstrating that
a redundancy of $n-k = \lceil \log_q (2n+1) \rceil$ is
within one symbol from the smallest possible
for any coding scheme that corrects one error in the $L_1$-metric.
Recall that by taking~$\bldu$ as a unit vector
it follows that for any row index~$i$,
the set of the~$q^k$ possible contents of~$A_i$
forms an (ordinary) code $\code_i \subseteq \Sigma_q^n$
that is capable of correcting one error.
Hence, by a sphere-packing argument we conclude that
for distinct~$\bldc \in \code_i$,
the (truncated) spheres $\Sphere_\Lee(\bldc,1) \cap \Sigma_q^n$
must be disjoint subsets of~$\Sigma_q^n$.
Yet $|\Sphere_\Lee(\bldc,1) \cap \Sigma_q^n| \ge n+1$,
and, so, $q^k = |\code_i| \le q^n/(n+1)$, namely,
we have the lower bound
\[
n-k \ge \lceil \log_q (n+1) \rceil \; .
\]

\subsection{Allowing additional error detection}
\label{sec:doubleerrordetection}

The presented coding scheme can be easily enhanced so
that the induced code has minimum distance~$4$; namely,
the scheme can detect two errors on top of correcting a single error,
or, alternatively, it can detect three errors with no attempt to correct
any error. We do this by extending the code length by~$1$ and
adding a parity bit to each row of $A = \encoder_1(A')$, as we did
in Example~\ref{ex:paritycode} (with~$A$ playing the role of~$A'$
therein). This, in turn, allows the decoder to recover the parity
of the number of errors (whether it was even or odd).
An odd number is seen as one error, and the algorithm
in Figure~\ref{fig:oneerrordecoder} is then applied.
Otherwise, a zero syndrome will indicate no errors, while
a nonzero syndrome indicates two errors (which will be flagged
by~$\failure$).

When $q > 2$, this extra error detection capability can sometimes be
achieved \emph{without} increasing the redundancy. To see this,
consider first the case where~$q$ is odd:
redefine~$m$ to be $\lceil \log_q (4n+2)) \rceil$
(depending on~$n$, the value of~$m$ may remain unchanged by
this redefinition), and modify condition~(i)--(ii) as follows.
\begin{list}{}{\settowidth{\labelwidth}{\textit{iii'}}}
\item[(i')]
The entries of~$\bldalpha$ are \emph{odd} distinct
elements in $\Int{4n{+}2}$.
\item[(ii')]
For any two indexes $i, j \in \Int{n}$,
\[
\alpha_i + \alpha_j \ne 4n+2 \; .
\]
\end{list}
(Note that condition~(iii), which remains unchanged, is consistent
with condition~(i'). Also, condition~(ii') disqualifies
$2n+1$ to be an entry\footnote{%
\label{fn:violationqodd}
Yet the coding scheme will work also when
$(\alpha_{n-1} =) \; q^{m-1} = 2n+1$,
in spite of violating condition~(ii').}
of~$\bldalpha$.) The encoding is
similar to~(\ref{eq:oneerrorencoder}), except that
the remainder at the right-hand side is now computed modulo~$4n+2$.
Accordingly, during decoding, the syndrome is redefined to
\[
s \leftarrow
\left( \bldy \cdot \bldalpha^\transpose \right) \; \Mod \; (4n+2) \; ,
\]
and, so, the parity of the syndrome equals the parity of
the number of errors. An odd syndrome indicates that one error
has occurred, in which case the error location and sign can be
recovered from the value of~$s$. A nonzero even syndrome~$s$
indicates that two errors have occurred.

Assume now that~$q$ is an even integer greater than~$2$.
In this case, condition~(i') would contradict condition~(iii),
as the latter requires that the last $m{-}1$ entries of
$\bldalpha$ be even. To overcome this impediment, we will
modify the definition of~$m$ and rewrite condition~(iii).
Specifically, we let~$m$ be the smallest integer
that satisfies $f_m(q) \ge 4n+2+(-1)^m$ where, for every
nonnegative $j \in \Integers$,
\begin{equation}
\label{eq:fjq}
f_j(q) = \frac{q^{j+1} + (-1)^j}{q+1} \; .
\end{equation}
Note that $f_0(q) = 1$ and that for every $j > 0$,
\[
f_j(q) = (q-1) \sum_{i \in \Int{j}} f_i(q) +
\left\{
\begin{array}{lcl}
1 && \textrm{if~$j$ is even} \\
0 && \textrm{otherwise}
\end{array}
\right.
\; ,
\]
which means that every integer in $\Int{4n{+}2}$
has a representation of the form\footnote{%
The sequence $(f_j(q))_j$ can be seen as a generalization
of the Jacobsthal sequence: see for instance~\cite{Hoaradam}.}
$\sum_{j \in \Int{m}} b_j f_j(q)$, for $b_j \in \Sigma_q$.
Moreover, the values $f_j(q)$ are all odd.
Hence, rewriting condition~(iii) as follows will be
consistent\footnote{%
\label{fn:violationqeven}
The coding scheme will work also when
$(\alpha_{n-1} =) \; f_{m-1}(q) = 2n+1$,
in spite of violating condition~(ii').}
with condition~(i'):
\begin{list}{}{\settowidth{\labelwidth}{\textit{iii'}}}
\item[(iii')]
$\alpha_{k+j} = f_j(q)$, for $j \in \Int{m}$.
\end{list}
   From this point onward, we proceed as in the case of odd~$q$.
(We point out that since $f_m(q) < q^m$ for $m > 0$,
the inequality $f_m(q) \ge 4n+2+(-1)^m$ is generally stronger than
$m \ge \log_q (4n+2)$; however, the ratio $f_m(q)/q^m$
does approach~$1$ as $q \rightarrow \infty$.)

\begin{example}
\label{ex:evenq>2}
Suppose that $q = 8$ and $n = 13$.
Since $f_1(8) = 7$ and $f_2(8) = 57$, we can take $m = 2$ and
\[
\bldalpha =
\left( \, 3 \; 5 \; 9 \; 11 \; 13 \; 15 \; 17 \; 19 \; 21 \; 23 \; 25
\bigm|
1 \; 7 \, \right)
\; ,
\]
resulting in a single-error-correcting double-error-detecting
coding scheme. The redundancy $n-k$ will be only~$2$ in this case
(as opposed to~$3$ had we added a parity bit
to the construction in Subsection~\ref{sec:singleerror}
for $n = 13$).\qed
\end{example}

\section{Larger minimum $L_1$-distances}
\label{sec:multipleerrors}

In this section, we show how to extend
the construction of Section~\ref{sec:oneerror} to
correct more errors in the $L_1$-metric. Our coding schemes
will make use of known construction for the Lee metric,
specifically Berlekamp codes,
to be recalled in the next subsection.

\subsection{Lee-metric codes}
\label{sec:Lee}

Let~$p$ be an odd prime and let $F = \GF(p)$.
Representing the elements of~$F$ as
$0, 1, 2, \ldots, p{-}1$, the last $(p{-}1)/2$ elements
in this list will be referred to as the ``negative'' elements in~$F$.
The \emph{Lee metric} over~$F$ is defined similarly to
the $L_1$-metric over~$\Integers$, using the following
definition of the absolute value (in~$\Integers$) of
an element $z \in F$:
\[
|z| =
\left\{
\begin{array}{ccl}
z   && \textrm{if~$z$ is ``nonnegative''} \\
p-z && \textrm{otherwise}
\end{array}
\right.
\; .
\]

Let~$n$ and~$\tau$ be positive integers such that $2 \tau < p$,
and let $h = \lceil \log_p (2n+1) \rceil$.
Also, let
\[
\bldbeta = ( \beta_0 \; \beta_1 \; \ldots \; \beta_{n-1} )
\]
be a vector of length~$n$
over the extension field $\Phi = \GF(p^h)$ whose entries are
nonzero and distinct and satisfy
$\beta_i + \beta_j \ne 0$ for every $i, j \in \Int{n}$
(compare with conditions~(i)--(ii) in Section~\ref{sec:oneerror};
it is easy to see that here, too, such
a vector~$\bldbeta$ always exists). The respective Berlekamp code,
$\Code_\Ber = \Code_\Ber(\bldbeta,\tau)$, is
defined as the set of all row vectors in~$F^n$ in the right kernel
of the following $\tau \times n$ parity-check matrix,
$H_\Ber = H_\Ber(\bldbeta,\tau)$, over~$\Phi$:
\begin{equation}
\label{eq:HBer}
H_\Ber =
\left(
\renewcommand{\arraystretch}{1.1}
\begin{array}{cccc}
\beta_1           & \beta_2           & \ldots & \beta_n          \\
\beta_1^3         & \beta_2^3         & \ldots & \beta_n^3        \\
\beta_1^5         & \beta_2^5         & \ldots & \beta_n^5        \\
\vdots            & \vdots            & \vdots & \vdots           \\
\beta_1^{2\tau-1} & \beta_2^{2\tau-1} & \ldots & \beta_n^{2\tau-1}
\end{array}
\right) \; .
\end{equation}
That is,
\[
\Code_\Ber =
\left\{
\bldc \in F^n \;:\; \bldc \cdot H_\Ber^\transpose = \bldzero
\right\}
\; .
\]
Thus, $\Code_\Ber$ is a linear $[n,k]$ code over~$F$
with a redundancy $n-k$
of at most~$\tau h = \tau \lceil \log_p (2n+1) \rceil$.

The minimum Lee distance of~$\Code_\Ber$ is known to be
at least $2\tau+1$, and there are known efficient algorithms
for decoding up to~$\tau$ Lee-metric errors
(see~\cite[Ch.~9]{Berlekamp} and~\cite[\S 10.6]{Roth}).
These decoders typically start with computing the syndrome,
$\blds = \bldy H_\Ber^\transpose$, of
the received vector $\bldy \in F^n$, and then implement
a function $\Decoder_\Ber : \Phi^\tau \rightarrow F^n$
which maps~$\blds$ to the error vector $\blde = \Decoder_\Ber(\blds)$.
The condition $2 \tau < p$ implies that~(\ref{eq:volumeLee}) is also
the volume of a Lee-metric sphere of radius~$t$ in~$F^n$.
Hence, by sphere-packing arguments,
the size of any Lee-metric $\tau$-error-correcting code in~$F^n$
is bounded from above by $p^n/\Volume(n,\tau)$~\cite[p.~318]{Roth}.
Thus, up to an additive term that depends on~$\tau$
(but not on~$n$) the dimension of~$\Code_\Ber$ is the largest possible,
for a given length~$n$ and number~$\tau$ of Lee-metric errors
to be corrected.

There is a close relationship between the construction presented
in Section~\ref{sec:oneerror} and Berlekamp codes. Specifically,
when~$n$ is taken so that $p = 2n+1$ is a prime,
then each row of $A = \encoder_1(A')$
is a codeword of $\Code_\Ber(\bldalpha,1)$,
assuming that the entries of~$A$ and~$\bldalpha$ are seen as elements of
$\Phi = F = \GF(2n{+}1)$. Consequently, the induced code~$\code$
forms a subset of $\Code_\Ber(\bldalpha,1)$.

With this relationship in mind, we will next present
coding schemes whose induced codes have
minimum $L_1$-distances~$5$ and above.

\subsection{Double-error-correcting coding scheme}
\label{sec:twoerrors}

In this subsection, we present a DPE coding scheme for correcting
two errors in the $L_1$-metric (namely, the induced code
will have minimum $L_1$-distance at least~$5$). This scheme will
then be extended to also detect three errors (minimum distance~$6$).

Given the alphabet~$\Sigma_q$ and the number of rows~$\ell$,
let $p > 3$ be a prime, and define
$n_1 = (p-1)/2$,
$m = \lceil \log_q p \rceil$,
$n_2 = n_1 + m$, and $n = n_2 + 1$.
The coding scheme will have
dimension $k = n_1-m$, length\footnote{%
The seeming restriction on~$n$ imposed by requiring that $2n_1+1$
is a prime can be lifted by code shortening.}
$n$ and, therefore, redundancy $n - k = 2m + 1$. The encoding mapping,
$\encoder_2 :
\Sigma_q^{k \times \ell} \rightarrow \Sigma_Q^{n \times \ell}$,
will take the form of a composition
\[
\encoder_2 = \hat{\varphi}_2 \circ \varphi_2 \circ \encoder_1 \; ,
\]
where the component functions are defined
in Figure~\ref{fig:twoerrorencoder}.
\begin{figure}[hbt]
\centering
\begin{myalgorithm}
\vspace{-1ex}
\begin{itemize}
\item
$\encoder_1$ is the encoding mapping in
Figure~\ref{fig:oneerrorencoder}, with~$n$ therein replaced by~$n_1$
(in particular, the remainder in the right-hand side
of~(\ref{eq:oneerrorencoder}) is computed modulo~$p$).
\item
$\varphi_2 :
\Sigma_q^{\ell \times n_1} \rightarrow \Sigma_q^{\ell \times n_2}$
maps an $\ell \times n_1$ matrix
$A' = (a_{i,j})_{i \in \Int{\ell}, j \in \Int{n_1}}$
over~$\Sigma_q$ to $A = ( A' \;|\; A'' )$,
where the last~$m$ columns in~$A$ are set so that
\begin{equation}
\label{eq:varphi2}
\sum_{j \in \Int{m}} a_{i,n_2+j} q^j =
\Bigl( \sum_{j \in \Int{n_1}} a_{i,j} \alpha_j^3 \Bigr)
\; \Mod \; p \; ,
\quad i \in \Int{\ell} \; .
\end{equation}
\item
$\hat{\varphi}_2 :
\Sigma_q^{\ell \times n_2} \rightarrow \Sigma_q^{\ell \times n}$
is the parity mapping defined on the last~$m$ columns
of the argument matrix; that is,
an $\ell \times n_2$ matrix
$A' = (a_{i,j})_{i \in \Int{\ell}, j \in \Int{n_2}}$
over~$\Sigma_q$ is mapped to $A = ( A' \;|\; A'' )$,
where the entries of the last column in~$A$ are given by
\[
a_{i,n_2}
= \Bigl( \sum_{j \in \Int{m}} a_{i,n_1+j} \Bigr) \; \Mod \; 2 \; ,
\quad i \in \Int{\ell} \; .
\]
\end{itemize}
\end{myalgorithm}
\caption{Component functions of a double-error-correcting encoding
mapping
$\encoder_2 = \hat{\varphi}_2 \circ \varphi_2 \circ \encoder_1$.}
\label{fig:twoerrorencoder}
\end{figure}

\begin{example}
\label{ex:twoerrorencoder}
Let $q = 2$ and $p = 31$, in which case $n_1 = 15$, $m = 5$,
$n_2 = 20$, $n = 21$, and $k = 10$.
Select $\ell = 3$ and let~$\bldalpha$
and~$A'$ be as in Example~\ref{ex:oneerrorencoder}.
For $i = 0, 1, 2$,
the values at the right-hand side of~(\ref{eq:varphi2}) are
\begin{eqnarray*}
(3^3{+}6^3{+}7^3{+}10^3{+}13^3{+}1^3{+}2^3{+}4^3{+}16^3) \; \Mod \; 31
& = & 16 \\
(7^3 + 10^3 + 11^3 + 14^3 + 4^3 + 16^3) \; \Mod \; 31 & = & 30 \\
(5^3 + 10^3 + 12^3 + 13^3 + 14^3 + 8^3) \; \Mod \; 31 & = & 29 \; ,
\end{eqnarray*}
and, so,
\begin{eqnarray*}
\encoder_2(A')
& = &
\hat{\varphi}_2 (\varphi_2(\encoder_1(A'))) \\
& = &
\left(
\arraycolsep0.2ex
\begin{array}{cccccccccc@{\;}|@{\;}ccccc@{\;}|@{\;}ccccc@{\;}|@{\;}c}
  1 & 0 & 1 & 1 & 0 & 1 & 0 & 0 & 1 & 0 & 1 & 1 & 1 & 0 & 1 &
                                          0 & 0 & 0 & 0 & 1 & 1 \\
  0 & 0 & 0 & 1 & 0 & 1 & 1 & 0 & 0 & 1 & 0 & 0 & 1 & 0 & 1 &
                                          0 & 1 & 1 & 1 & 1 & 0 \\
  0 & 1 & 0 & 0 & 0 & 1 & 0 & 1 & 1 & 1 & 0 & 0 & 0 & 1 & 0 &
                                          1 & 0 & 1 & 1 & 1 & 0
\end{array}
\right)
\; .
\end{eqnarray*}
\qed
\end{example}

It follows from the definition of~$\encoder_2$ that
every codeword $\bldc = \bldu A$ in the code~$\code_2$
induced by~$\encoder_2$ satisfies the following congruences:
\begin{eqnarray}
\label{eq:twoerror1}
\sum_{j \in \Int{n_1}} c_j \alpha_j
& \equiv & 0
\quad (\mod \; p) \\
\label{eq:twoerror2}
\sum_{j \in \Int{n_1}} c_j \alpha_j^3
& \equiv & \sum_{j \in \Int{m}} c_{n_1+j} q^j
\quad (\mod \; p) \\
\label{eq:twoerror3}
\sum_{j \in \Int{m+1}} c_j
& \equiv & 0
\quad (\mod \; 2)
\; .
\end{eqnarray}

\begin{proposition}
\label{prop:twoerrors}
The induced code~$\code_2$ satisfies
\[
\distance_\Lee(\code_2) \ge 5 \; .
\]
\end{proposition}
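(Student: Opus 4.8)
The plan is to read the bound directly off the defining congruences. Take $\bldc_1,\bldc_2\in\code_2$ with distinct $k$-prefixes and set $\blde=(e_j)_{j\in\Int{n}}=\bldc_1-\bldc_2$. Since (\ref{eq:twoerror1})--(\ref{eq:twoerror3}) are linear (the first two over $\GF(p)$, the last modulo~$2$), the difference~$\blde$ obeys the homogeneous relations
\[
\sum_{j \in \Int{n_1}} e_j \alpha_j \equiv 0 \pmod{p}
\qquad \mbox{and} \qquad
\sum_{j \in \Int{n_1}} e_j \alpha_j^3 \equiv \sum_{j \in \Int{m}} e_{n_1+j} q^j \pmod{p} ,
\]
together with the parity relation built by $\hat{\varphi}_2$ on the $m{+}1$ redundancy positions. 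It suffices to show that $\|\blde\|\le4$ forces $e_0=\cdots=e_{n_1-1}=0$, since then $\blde'=\bldzero$, i.e.\ $\bldc_1'=\bldc_2'$, contradicting the distinctness of the prefixes. I will split the weight as $w_a=\sum_{j\in\Int{n_1}}|e_j|$ on the locator positions and $w_r=\sum_{j=n_1}^{n_2}|e_j|$ on the redundancy positions, so $w_a+w_r=\|\blde\|\le4$.

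Because $p=2n_1+1$ is prime, the Berlekamp parameter $h=\lceil\log_p(2n_1{+}1)\rceil$ equals~$1$, so $\Phi=F=\GF(p)$ and, by conditions~(i)--(ii), $\bldalpha$ is a legitimate Berlekamp locator vector. Reducing the length-$n_1$ restriction of~$\blde$ modulo~$p$, the first relation places it in $\Code_\Ber(\bldalpha,1)$, and whenever the second syndrome also vanishes it lies in $\Code_\Ber(\bldalpha,2)$. The fact I will lean on is that reduction modulo~$p$ never increases a coordinate's weight, so the Lee weight of this restriction is at most~$w_a$; and since $2\tau=4<p$, the minimum Lee distances~$3$ and~$5$ of $\Code_\Ber(\bldalpha,1)$ and $\Code_\Ber(\bldalpha,2)$ are in force.

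First I would exploit the parity relation~(\ref{eq:twoerror3}): since each signed error is congruent to its absolute value modulo~$2$, the relation forces $w_r\equiv0\pmod2$. Now split on~$w_r$. If $w_r=0$, the right-hand side of the second relation is the integer~$0$, so \emph{both} syndromes of the length-$n_1$ restriction vanish; being a codeword of $\Code_\Ber(\bldalpha,2)$ of Lee weight at most $w_a\le4<5$, it is $\bldzero$ modulo~$p$, and as every $|e_j|\le4<p$ it is the zero integer vector. If instead $w_r\ge2$, then $w_a\le2$, and the restriction is a codeword of $\Code_\Ber(\bldalpha,1)$ of Lee weight at most $2<3$, so it is again $\bldzero$ modulo~$p$ and hence identically zero. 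Either way $e_0=\cdots=e_{n_1-1}=0$, so $\blde'=\bldzero$, and therefore $\distance_\Lee(\code_2)\ge5$.

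I expect the main obstacle to be the interface between the integer $L_1$-metric of the statement and the Lee metric over~$\GF(p)$ supplying the distance guarantees: one must verify both that reduction modulo~$p$ can only shrink weights and that a vector which is $\equiv\bldzero\pmod p$ with all entries bounded by $4<p$ is genuinely zero over~$\Integers$. The second delicate point is the role of the parity bit. A redundancy error perturbs only the second syndrome, through the base-$q$ value it stores, so everything turns on (\ref{eq:twoerror3}) ruling out an odd $w_r$; in particular it excludes a weight-three error on the locator positions (permitted by the distance-$3$ code, hence with first syndrome zero) paired with a single redundancy error whose stored value happens to match the resulting second syndrome, a configuration of total weight~$4$ and nonzero prefix that would otherwise defeat the bound.
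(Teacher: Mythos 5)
Your proof is correct, and it takes a genuinely different route from the paper's. The paper argues operationally: it takes a received word $\bldy = \bldc + \blde$ with $\|\blde\| \le 2$, computes the syndrome $(s_1 \; s_2 \; \hat{s}_2)$, and shows by a case analysis on $(s_1,\hat{s}_2)$ that $\blde_1$ is always recoverable (either trivially, or by a decoder for $\Code_\Ber(\bldalpha,2)$, or by $\decoder_1$); this establishes two-error-correctability, hence the distance bound, and its payoff is that the proof \emph{is} the decoding mapping $\decoder_2$ of Figure~\ref{fig:twoerrordecoder} and the backbone of Table~\ref{tab:threeerrors}. You instead work directly with the definition of $\distance_\Lee(\code_2)$: by linearity the difference $\blde$ of two codewords satisfies the homogeneous congruences, the parity constraint forces the redundancy weight $w_r$ to be even, and the two cases $w_r = 0$ and $w_r \ge 2$ reduce, respectively, to the minimum Lee distances $5$ and $3$ of $\Code_\Ber(\bldalpha,2)$ and $\Code_\Ber(\bldalpha,1)$. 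Your case split is the codeword-difference mirror of the paper's (your $w_r=0$ versus $w_r\ge 2$ corresponds to the paper's $\hat{s}_2=0$ versus $\hat{s}_2=1$, both powered by the parity bit), and the interface points you flag are exactly the ones needing care and are all verified correctly: Lee weight cannot increase under reduction modulo~$p$; a vector congruent to $\bldzero$ modulo~$p$ with entries of absolute value at most $4 < p$ is zero over~$\Integers$ (using $p>3$, so $p\ge 5$); and conditions (i)--(ii) make $\bldalpha$, read modulo $p = 2n_1+1$, a legitimate Berlekamp locator vector. (You also correctly read the congruence~(\ref{eq:twoerror3}) as a parity constraint on positions $n_1,\ldots,n_2$, consistent with $\hat{\varphi}_2$ and Eq.~(\ref{eq:twoerrorsyndrome3}).) As for what each approach buys: yours needs only the minimum-distance property of Berlekamp codes, not their decoders, and is the more direct proof of the statement as literally formulated, with decodability then following abstractly from Proposition~\ref{prop:distance}; the paper's proof does more work but is constructive, handing over the efficient decoder that the DPE application ultimately requires.
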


\begin{proof}
Let $\bldy = \bldc + \blde$ be
the read vector at the DPE output, where $\bldc \in \code$
and $\| \blde \| \le 2$.
Write $\blde = (\blde_1 \;|\; \blde_2)$ and
$\bldy = (\bldy_1 \;|\; \bldy_2)$,
where~$\blde_1$ (respectively, $\bldy_1$) is
the $n_1$-prefix of~$\blde$ (respectively, $\bldy$).
We associate with~$\bldy$ the integer syndrome
$(s_1 \; s_2 \; \hat{s}_2)$ computed as
in Eqs.~(\ref{eq:twoerrorsyndrome1})--(\ref{eq:twoerrorsyndrome3})
in Figure~\ref{fig:twoerrordecoder}.

We distinguish between the following cases.

\emph{Case 1: $s_1 = 0$.}
In this case, $\blde_1 = \bldzero$
(i.e., $\bldy_1$ is error-free),
since $\| \blde_1 \| \in \{ 1, 2 \}$ implies $s_1 \ne 0$.

\emph{Case 2: $s_1 \ne 0$ and $\hat{s}_2 = 0$.}
In this case, $\| \blde_1 \| \in \{ 1, 2 \}$,
and (by~(\ref{eq:twoerror3})),
$\bldy_2$ contains an even number of errors,
which means that $\blde_2 = \bldzero$. Therefore,
\begin{eqnarray*}
s_2 & \equiv &
\sum_{j \in \Int{n_1}} y_j \alpha_j^3
- \sum_{j \in \Int{m}} y_{n_1+j} q^j \\
& \equiv &
\sum_{j \in \Int{n_1}} y_j \alpha_j^3
- \sum_{j \in \Int{m}} c_{n_1+j} q^j \\
& \equiv &
\sum_{j \in \Int{n_1}} (c_j + e_j) \alpha_j^3
- \sum_{j \in \Int{m}} c_{n_1+j} q^j \\
& \stackrel{(\ref{eq:twoerror2})}{\equiv} &
\sum_{j \in \Int{n_1}} e_j \alpha_j^3
\quad (\mod \; p) \; .
\end{eqnarray*}
On the other hand, from~(\ref{eq:twoerror1}) we also have
\[
s_1 \equiv
\sum_{j \in \Int{n_1}} e_j \alpha_j \quad (\mod \; p) \; .
\]
Hence, we can recover~$\blde_1$ by applying a decoder
$\Decoder_\Ber$ for $\Code_\Ber(\bldalpha,2)$ to
the syndrome\footnote{%
When applying this decoder, we regard
$\bldalpha$ and $(s_1 \; s_2)$ as vectors over $\Phi = F = \GF(p)$.
The decoder~$\Decoder_\Ber$ produces
an error vector in $F^{n_1}$, which is mapped back to $\Integers^{n_1}$
by changing each given entry $z \in F$ into $\pm |z|$, with the negative
sign taken when~$z$ is a ``negative'' element of~$F$.}
$(s_1 \; s_2)$.

\emph{Case 3: $s_1 \ne 0$ and $\hat{s}_2 = 1$.}
This is possible only if
$\| \blde_1 \| = \| \blde_2 \| = 1$,
which means that we are able to recover~$\blde_1$ from~$s_1$
(using the decoding mapping in Figure~\ref{fig:oneerrordecoder}).
\end{proof}

Note that in case~1 in the last proof, $\bldy_2$ may contain
one or two errors, yet we do not attempt to decode them;
in fact, their decoding might not be unique. However,
$\bldy_1$ still decodes correctly.

The proof of Proposition~\ref{prop:twoerrors}
implies the decoding mapping~$\decoder_2$
shown in Figure~\ref{fig:twoerrordecoder}.
\begin{figure}[hbt]
\centering
\begin{myalgorithm}
\vspace{-1ex}

\quad \textbf{Input:}
$\bldy = (\bldy_1 \;|\; \bldy_2) \in \Sigma_q^n$.

\quad \textbf{Output:}
$\bldw \in \Sigma_q^k$.

\quad $//$
$n_1 = (p-1)/2$,
$m = \lceil \log_q p \rceil$,
$n_2 = n_1 + m$.

\quad $//$
$n = n_2 + 1$, $k = n_1-m$.

\quad $//$
$\bldalpha$ satisfies conditions~(i)--(iii).

\quad $//$
$\bldy_1$ is the $n_1$-prefix of~$\bldy$.

\quad $//$
$\Decoder_\Ber(\cdot)$ is a decoder for $\Code_\Ber(\bldalpha,2)$.

\begin{clause}
\item
Let
\begin{eqnarray}
\label{eq:twoerrorsyndrome1}
s_1 & \leftarrow &
\Bigl( \sum_{j \in \Int{n_1}} y_j \alpha_j \Bigr)
\; \Mod \; p \\
\label{eq:twoerrorsyndrome2}
s_2 & \leftarrow &
\Bigl( \sum_{j \in \Int{n_1}} y_j \alpha_j^3 -
\sum_{j \in \Int{m}} y_{n_1+j} q^j \Bigr)
\; \Mod \; p \\
\label{eq:twoerrorsyndrome3}
\hat{s}_2 & \leftarrow &
\Bigl( \sum_{j \in \Int{m+1}} y_{n_1+j} \Bigr)
\; \Mod \; 2  \; ;
\end{eqnarray}

\item
If $s_1 = 0$ then $\{$
\begin{clause}
\item
Let $\bldw \leftarrow \bldy'$;
\quad $//$ $\bldy_1$ is error-free
\end{clause}
$\}$

\item
Else if $\hat{s}_2 = 0$ then $\{$
\begin{clause}
\item
Let $\blde_1 \leftarrow \Decoder_\Ber(s_1 \; s_2)$;
\item
Let $\bldw \leftarrow (\bldy_1 - \blde_1)'$;
\quad $//$ $\bldw$ is the $k$-prefix of $\bldy_1 - \blde_1$
\end{clause}
$\}$

\item
Else $\{$
\begin{clause}
\item
Let $\bldw \leftarrow \decoder_1(\bldy_1)$.
\end{clause}
$\}$
\end{clause}

\end{myalgorithm}
\caption{Decoding mapping $\decoder_2 : \bldy \mapsto \bldw$
for double-error correction.}
\label{fig:twoerrordecoder}
\end{figure}

We include in Appendix~\ref{sec:example}
an example of an application of
the decoder in Figure~\ref{fig:twoerrordecoder};
for self-containment, that example also recalls
the decoding principles of Berlekamp codes.

The coding scheme $(\encoder_2,\decoder_2)$ can be extended
to also detect three errors, by adding an overall parity bit
to each row of the encoded matrix~$A$, as was done
in Subsection~\ref{sec:doubleerrordetection}. Moreover,
the savings shown there when $q > 2$ carries over also
to minimum distances~$5$ and~$6$.

Specifically, for odd~$q$, we redefine~$m$ to be
$\lceil \log_q (2p) \rceil = \lceil \log_q (4n_1+2) \rceil$,
and require\footnote{%
See Footnote~\ref{fn:violationqodd} for the case
where $q = p$.}
$\bldalpha$ to satisfy conditions~(i')--(ii').
The encoding mapping~$\encoder_2$ is redefined to just
$\varphi_2 \circ \encoder_1$, with code length
$n = n_2 = n_1 + m$ and redundancy $n-k = 2m$.
The component functions~$\encoder_1$ and~$\varphi_2$
are as in Figure~\ref{fig:twoerrorencoder}, except that
all remainders modulo~$p$ are now computed modulo~$2p$.

The function of the syndrome element~$\hat{s}_2$ in the decoding
process is replaced by the parities of~$s_1$ and~$s_2$,
when computed as in~(\ref{eq:twoerrorsyndrome1})
and~(\ref{eq:twoerrorsyndrome2}),
except that the remainders are taken modulo~$2p$
(yet~$\Code_\Ber$ is still defined over $F = \GF(p)$,
so when its decoder is applied to $(s_1 \; s_2)$,
the syndrome entries are reduced first modulo~$p$).
Specifically, assuming that
at most three errors have occurred,
Table~\ref{tab:threeerrors} presents the various
combinations of parities of~$s_1$ and~$s_2$,
and the corresponding $L_1$-norms of~$\blde_1$ and~$\blde_2$.
\begin{table}[hbt]
\caption{Decoding two errors and detecting three errors.}
\label{tab:threeerrors}
\centering
\tabcolsep1.0ex
\begin{tabular}{cccccc}
\hline\hline
$s_1$ & $s_2$ & $s_2 \equiv s_1^3$? & $\| \blde_1 \|$ & $\| \blde_2 \|$
                                    & Decoder output                  \\
\hline\hline
$0$          &$-$  &$-$ &$0$&$-$& $\bldy'_1$                          \\
even $\ne 0$ &even &$-$ &$2$&$0$& $(\bldy_1-\Decoder_\Ber(s_1\;s_2))'$\\
odd          &even &$-$ &$1$&$1$& $\decoder_1(\bldy_1)$               \\
even $\ne 0$ &odd  &$-$ &$2$&$1$& $\failure$                          \\
odd          &odd  &yes &$1$&$0$& $\decoder_1(\bldy_1)$               \\
odd          &odd  &no  &$3$&$0$& $\failure$                          \\
odd          &odd  &no  &$1$&$2$& $\failure$                          \\
\hline\hline
\end{tabular}
\end{table}
The first three rows in the table correspond, respectively, to the three
cases in the proof of Proposition~\ref{prop:twoerrors}.
The fourth row corresponds to three errors and therefore
should result in a decoding failure. The last three rows
correspond to three different combinations of number of errors:
the distinction among them can be made
by checking  whether $s_2 \equiv s_1^3 \; (\mod \; p)$,
and, since $\Code_\Ber(\bldalpha,2)$ has minimum Lee distance~$5$,
this condition will be met only when one error has occurred
(in which case it can be found using
the decoding mapping in Figure~\ref{fig:oneerrordecoder}).

When~$q$ is even and greater than~$2$, we will follow the same
strategy as in Section~\ref{sec:oneerror}, namely,
replacing the terms\footnote{%
See Footnote~\ref{fn:violationqeven} for the case
where $f_{m-1}(q) = p$.}
$q^j$ with $f_j(q)$ defined in~(\ref{eq:fjq}),
both in condition~(iii) and in~(\ref{eq:varphi2}).

\begin{example}
\label{ex:evenq>2twoerrors}
Suppose that $q = 4$ and $p = 101$, corresponding to
$n_1 = 50$. The values $f_j(4)$ for $j = 0, 1, 2, 3, 4$
are~$1$, $3$, $13$, $51$, and~$205$, respectively,
so we can take $m = 4$ and
\[
\bldalpha =
\left(
\, 5 \; 7 \; 9 \; 11 \; 15 \; 17 \, \ldots \, 47 \; 49 \; 53 \; 55
\, \ldots \,
97 \; 99
\bigm|
1 \; 3 \; 13 \; 51 \, \right)
.
\]
The respective double-error-correcting triple-error-detecting
coding scheme has length $n = n_1 + m = 54$ and redundancy
$n - k = 2m = 8$ (and dimension $k = 46$).
An example of an image of the encoding mapping
$\encoder_2 = \varphi_2 \circ \encoder_1$ is given by
\[
A = \encoder_2(A') =
\left(
\arraycolsep0.2ex
\begin{array}{cccccccccc@{\;}|@{\;}cccc@{\;}|@{\;}cccc}
1& 2& 3& 0& 1& 2& 0& 0& \ldots & 0& 2& 1& 0& 2& 0& 1& 3& 2 \\
0& 3& 0& 1& 2& 3& 0& 0& \ldots & 0& 3& 3& 2& 1& 1& 2& 2& 3 \\
2& 1& 1& 3& 2& 0& 0& 0& \ldots & 0& 2& 3& 0& 2& 2& 0& 0& 2
\end{array}
\right)
\; ,
\]
with
\[
\bldc = \bldu A =
\left( \,
4\; 14\; 7\; 6\; 10\; 13\; 0\; 0 \, \ldots \, 0 \bigm|
15\; 14\; 6\; 9 \bigm| 5\; 8\; 12\; 15
\, \right)
\]
being an example of a codeword (which corresponds to the input
vector $\bldu = ( 2 \; 3 \; 1 )$).

Given a read vector
$\bldy = (y_j)_{j \in \Int{n}} \in \Sigma_Q^n$
(where $Q = 28$ and $n = 54$), its syndrome is given by
\begin{eqnarray*}
s_1 & = &
\Bigl( \sum_{j \in \Int{n_1}} y_j \alpha_j \Bigr)
\; \Mod \; (2p) \\
s_2 & = &
\Bigl( \sum_{j \in \Int{n_1}} y_j \alpha_j^3 -
\sum_{j \in \Int{m}} y_{n_1+j} f_j(q) \Bigr)
\; \Mod \; (2p)
\end{eqnarray*}
(where $q = 4$, $p = 101$, $n_1 = 50$, and $m = 4$).
Correction of two errors and detecting of three then
proceeds by following Table~\ref{tab:threeerrors},
where~$\bldy_1$ and~$\bldy'_1$ are the prefixes of~$\bldy$
of lengths $n_1 = 50$ and $k = 46$, respectively.\qed
\end{example}

\subsection{Recursive coding scheme}
\label{sec:recursive}

The construction in Subsection~\ref{sec:twoerrors} does not seem
to generalize in a straightforward way to larger
minimum $L_1$-distances. However, with some redundancy increase
(which will be relatively mild for code lengths sufficiently large),
we can construct coding schemes for
any prescribed number of correctable errors. We show this next.

Given the alphabet~$\Sigma_q$, number of rows~$\ell$,
designed number of correctable errors\footnote{%
For simplicity, we assume that $\sigma = 0$, namely,
no additional errors are to be detected.}
$\tau$, let $p > 2\tau$ be a prime, and define $n = (p-1)/2$
and $m = \lceil \log_q p \rceil$. Also, let
$\bldalpha = (\alpha_j)_{j \in \Int{n}}$ be an integer vector
that satisfies conditions~(i)--(iii) in Section~\ref{sec:oneerror}.

Given a matrix $A \in \Sigma_q^{\ell \times n}$
(which, at this point, is not assumed to be the result of any encoding),
we can compute
the following $\ell \times \tau$ \emph{syndrome matrix} of~$A$ over
$\Integers$:
\[
S = (s_{i,v})_{i \in \Int{\ell},v \in \Int{\tau}}
= A H_\Ber^\transpose \; \Mod \; p \; ,
\]
where $H_\Ber = H_\Ber(\bldalpha,\tau)$ is the parity-check matrix
defined in~(\ref{eq:HBer}), now seen as a matrix over~$\Integers$,
and the remainder computed entry-wise.
For a vector $\bldu \in \Sigma_q^\ell$, the syndrome
$\blds = \blds(\bldu)$ of $\bldc = \bldc(\bldu) = \bldu A$ is
then given by
\begin{eqnarray*}
\blds & = & \bldc H_\Ber^\transpose \; \Mod \; p \\
      & = & \bldu A H_\Ber^\transpose \; \Mod \; p \\
      & = & \bldu S \; \Mod \; p \; .
\end{eqnarray*}
If the syndrome~$\blds$ is available to the decoder,
then the decoder should be able to recover $\bldc = \bldu A$ from
an erroneous copy $\bldy = \bldc + \blde \; (\in \Integers^n)$,
provided that $\| \blde \| \le \tau$: this is simply because
the syndrome~$\hat{\blds}$ of~$\blde$ is computable from~$\blds$
and the syndrome of~$\bldy$,
\begin{eqnarray*}
\hat{\blds} =
\blde H_\Ber^\transpose \; \Mod \; p
      & = & (\bldy - \bldc) H_\Ber^\transpose \; \Mod \; p \\
      & = & (\bldy H_\Ber^\transpose  - \blds) \; \Mod \; p \; ,
\end{eqnarray*}
and $\blde \leftarrow \Decoder_\Ber(\hat{\blds})$,
where $\Decoder_\Ber(\cdot)$ is a decoder
for $\Code_\Ber(\bldalpha,\tau)$.
Thus, our encoding mapping will be designed so that,
\emph{inter alia}, the decoder is able to reconstruct a copy of~$\blds$.

Each entry in~$S$, being an integer in $\Int{p}$, can be
expanded to its base-$q$ representation
\[
s_{i,v} = \sum_{j \in \Int{m}} s_{i,v}^{(j)} q^j \; ,
\]
where $s_{i,v}^{(j)} \in \Sigma_q$.
For $j \in \Int{m}$, let~$S^{(j)}$ be the $\ell \times \tau$ matrix
$(s_{i,v}^{(j)})_{i \in \Int{\ell},v \in \Int{\tau}}$ over~$\Sigma_q$.
Clearly,
\[
S = \sum_{j \in \Int{m}} q^j S^{(j)}
\]
and, so,
\begin{eqnarray*}
\blds & = & \bldu S \; \Mod \; p \\
      & = &
\Bigl( \sum_{j \in \Int{m}} q^j (\bldu S^{(j)}) \Bigr) \; \Mod \; p \\
      & = &
\Bigl( \sum_{j \in \Int{m}} q^j \blds^{(j)} \Bigr) \; \Mod \; p \; ,
\end{eqnarray*}
where $\blds^{(j)} = \blds^{(j)}(\bldu) = \bldu S^{(j)}$ is
a vector in~$\Sigma_Q^m$. Consider an encoding mapping
$\encoder : \Sigma_q^{\ell \times n} \rightarrow
\Sigma_q^{\ell \times (n + \tau m)}$ defined by
\[
\encoder : A \mapsto
\left( A \;|\; S^{(0)} \;|\; S^{(1)} \;|\; \cdots \;|\;
S^{(m-1)} \right)
\; .
\]
Then, for $\bldu \in \Sigma_q^\ell$ we have
\[
\bldu \, \encoder(A) =
\left( \bldc \;|\; \blds^{(0)} \;|\; \blds^{(1)} \;|\; \cdots \;|\;
\blds^{(m-1)} \right)
\]
(where $\bldc = \bldc(\bldu)$ and $\blds^{(j)} = \blds^{(j)}(\bldu)$).
If $\bldy = \bldu \, \encoder(A) + \blde$ where $\| \blde \| \le \tau$,
then, based on our previous discussion,
we will be able to recover~$\bldc$,
\emph{as long as the $\tau m$-suffix of~$\bldy$ is error-free}.

The latter assumption (of an error-free suffix) can be guaranteed
by applying a (second) encoding mapping to
the $\ell \times \tau m$ matrix
\[
\left( S^{(0)} \;|\; S^{(1)} \;|\; \cdots \;|\; S^{(m-1)} \right)
\]
(over~$\Sigma_q$) so that~$\tau$ errors can be corrected.
Note that the matrix now has $\tn = \tau m$ columns
(instead of~$n$), so we can base our encoding on
a Berlekamp code over $\GF(\tilde{p})$,
where~$\tilde{p}$ is the smallest prime which is at least
$2 \tn + 1$.
The size of the syndrome now will be $\tau \tilde{m}$, where
$\tilde{m} = \lceil \log_q \tilde{p} \rceil$, namely, becoming
\emph{doubly-logarithmic} in~$n$.

We can continue this process recursively; by just applying
one more recursion level with a simple repetition encoding mapping
(which copies its input $2\tau+1$ times at the output), we obtain
a total redundancy of
\begin{eqnarray}
\lefteqn{
\tau m + (2\tau + 1) \tau \tilde{m} =
\tau \lceil \log_q (2n+1) \rceil
} \makebox[15ex]{} \nonumber \\
\label{eq:redundancy}
&&
\quad {} +
O \left(\tau^2 \log_q (\tau \log_q n) \right) \; .
\end{eqnarray}
Hence, for~$n$ large compared to~$\tau$, most of
the redundancy is due to the first encoding level.
In fact, by extending the sphere-packing argument
presented at the end of Subsection~\ref{sec:oneerror}
it follows that the redundancy~(\ref{eq:redundancy}) is optimal,
up to an additive term that depends on~$\tau$, but not on~$n$.

Decoding is carried out backwards, starting with recovering
the codeword that corresponds to the last encoding level, which,
in turn, serves as the syndrome of the previous encoding level.

Reflecting now back on our constructions in
Sections~\ref{sec:oneerror} and~\ref{sec:twoerrors},
if the matrix $A \; (\in \Sigma_q^{\ell \times n})$ is the output
of the encoding scheme in Figure~\ref{fig:oneerrorencoder},
then the first column of the syndrome matrix~$S$ is zero,
and therefore so is the first column in each matrix $S^{(j)}$
(and the first entry in each vector $\blds^{(j)}$). Hence, those
zero columns can of course be removed. As for the second column,
our construction in Subsection~\ref{sec:twoerrors} implies
that it can be error-protected simply by a parity bit
(or, when $q > 2$, by changing the modulus from~$p$ to~$2p$
and selecting the entries of~$\bldalpha$ to be odd).

The approach of recursive encoding is not new, and
has been used, for example, in the context of constrained coding
(e.g., see~\cite{Bliss}, \cite{FC}, \cite{Immink}, \cite{Knuth},
\cite{Mansuripur}).
In our setting, this approach
allows us to use codes (namely, Berlekamp codes), which are originally
defined over one alphabet of size~$p$, while the result of
the encoding (namely, the contents of the rows of the DPE matrix)
are restricted to belong to another alphabet of size~$q$
(the challenge is evident when $q < p$). In the next subsection,
we consider a more straightforward application of Berlekamp codes
to construct a coding scheme for the case where~$q$ is large enough;
this scheme may sometimes have a smaller redundancy
than~(\ref{eq:redundancy}).

\subsection{Coding scheme for large alphabets}
\label{sec:largealphabets}

We consider here the case where the number of correctable errors
$\tau$ and the alphabet size~$q$ are such that there exists
a prime~$p$ that satisfies
\[
2\tau < p \le q \; .
\]
We will then assume that~$p$ is the largest prime that does
not exceed~$q$, and we let~$F$ be the finite field $\GF(p)$.

We will use a systematic encoder
$\Encoder_\Ber : F^k \rightarrow \Code_\Ber$,
where $\Code_\Ber = \Code_\Ber(\bldbeta,\tau)$ is a Berlekamp code of
a prescribed length~$n$ over~$F$ and redundancy
\begin{eqnarray*}
n-k
& \le & \tau \cdot \lceil \log_p (2n+1) \rceil \\
& = &
\tau \cdot \lceil (\log_p q) \cdot \log_q (2n+1) \rceil
\end{eqnarray*}
(when~$n$ is sufficiently large compared to~$\tau$,
the inequality is known to hold with equality).
When $q = p$, this redundancy is smaller
than~(\ref{eq:redundancy}); otherwise, it will be larger
for~$\tau$ (much) smaller than~$n$, due to the factor $\log_p q$
(e.g., for $q = 8$, this factor is approximately $1.07$).

Our encoding mapping
$\encoder:\Sigma_q^{\ell \times k} \rightarrow \Sigma_q^{\ell \times n}$
takes each row in the pre-image matrix
$A' \in \Sigma_q^{\ell \times k}$, computes the remainder
of each entry modulo~$p$, regards the result as a vector
in $F^k$, and applies to it the encoder~$\Encoder_\Ber$ to produce
a codeword $\bldc \in \Code_\Ber$.
The $(n{-}k)$-suffix, $\bldc''$, of~$\bldc$ becomes
the $(n{-}k)$-suffix of the respective row in
the image $A = (A' \;|\; A'') = \encoder(A')$
(see Figure~\ref{fig:largealphabetencoder}).

\begin{figure}[hbt]
\centering
\begin{myalgorithm}
\vspace{-1ex}

\quad \textbf{Input:}
$\ell \times k$ matrix
$A' = (a_{i,j})_{i \in \Int{\ell}, j \in \Int{k}}$ over~$\Sigma_q$.

\quad \textbf{Output:}
$\ell \times n$ matrix
$(A' \;|\; A'') = (a_{i,j})_{i \in \Int{\ell}, j \in \Int{n}}$
over~$\Sigma_q$.

\quad $//$
$F = \GF(p)$, for a prime~$p$ such that
$2\tau + 1 \le p \le q$.

\quad $//$
$\Code_\Ber(\bldbeta,\tau)$ is a Berlekamp code of length~$n$ over~$F$.

\quad $//$
$\Encoder_\Ber : F^k \rightarrow \Code_\Ber(\bldbeta,\tau)$
is a systematic encoder.

\begin{clause}
\item
For all $i \in \Int{\ell}$ do $\{$
\begin{clause}
\item
Let $\bldc \leftarrow \Encoder_\Ber(A'_i \; \Mod \; p)$;
\item
Let $A''_i \leftarrow \bldc''$.
\quad $//$ $F$ seen as a subset of~$\Sigma_q$
\end{clause}
$\}$
\end{clause}

\end{myalgorithm}
\caption{Encoding mapping
$\encoder : A' \mapsto (A' \;|\; A'')$
for large alphabets.}
\label{fig:largealphabetencoder}
\end{figure}

It follows from the construction that the codewords
of the induced code~$\code$, when reduced modulo~$p$,
are codewords of~$\Code_\Ber$. Thus, we obtain a coding scheme
that can correct~$\tau$ errors.

\section{Coding scheme for the Hamming metric}
\label{sec:Hamming}

In this section, we present a coding scheme that handles
errors in the Hamming metric; namely, the number of errors
equals the number of positions in which the read vector
$\bldy = (y_j)_{j \in \Int{n}} \in \Sigma_Q^n$
differs from the correct computation
$\bldc = (c_j)_{j \in \Int{n}} = \bldu A$.

For the purpose of the exposition, we will introduce
yet another design parameter, $\threshold$, which will be
an assumed upper bound on the absolute value of any error value,
namely, on
\[
\max_{j \in \Int{n}} |y_j - c_j| \; .
\]
Such an error model may be of independent interest in DPE applications,
with the special case $\threshold = Q-1 = \ell(q{-}1)^2$
being equivalent to the ordinary Hamming metric.

Given the alphabet~$\Sigma_q$, number of rows~$\ell$,
number of columns~$n$, number of correctable errors\footnote{%
We assume that $\sigma = 0$ (as in Subsection~\ref{sec:recursive})
and that there are no erasures.}
$\tau$,
and an upper bound~$\threshold$ on the error absolute value
(where $\threshold = \ell(q{-}1)^2$ for unconstrained error values),
let~$p$ be a prime greater than $2 \threshold$ and
let $m = \lceil \log_q p \rceil$.
We select
a respective linear $\tau$-error-correcting $[\tn,k]$ code~$\Code$
over $F = \GF(p)$ (in the Hamming metric), which is assumed to have
an efficient bounded-distance decoder
$\Decoder : F^\tn \rightarrow F^\tn$:
for a received word $\tilde{\bldy} \in F^\tn$, the decoder returns
the true error vector $\tilde{\blde} \in F^\tn$,
provided that its Hamming weight $\weight(\tilde{\blde})$
was at most~$\tau$.

The parameters~$n$ and~$\tn$ are related by
\[
n = k + m(\tn-k) \; .
\]
Figure~\ref{fig:Hammingencoder} presents
an encoding mapping $\encoder : A' \mapsto (A' \;|\; A'')$,
where each row of~$A'$, when reduced modulo~$p$, is first
extended by the systematic encoder for~$\Code$ into a codeword
$\tilde{\bldc}$ of~$\Code$, and then
the $\tn-k$ redundancy symbols (over~$F$) in
$\tilde{\bldc}$ are expanded to their base-$q$ representations
to form the respective row in~$A''$. Specifically,
\[
A'' = \left( A^{(0)} \;|\; A^{(1)} \;|\; \cdots \;|\; A^{(m-1)} \right)
\; ,
\]
where each block $A^{(j)}$ is
an $\ell \times (\tn{-}k)$ sub-matrix
over~$\Sigma_q$, such that the rows of
the $\ell \times \tn$ matrix
\[
\tilde{A} =
\left( \textstyle A' \;\Bigm|\; \sum_{j \in \Int{m}} q^j A^{(j)} \right)
\; \Mod \; p
\]
form codewords of~$\Code$.
\begin{figure}[hbt]
\centering
\begin{myalgorithm}
\vspace{-1ex}

\quad \textbf{Input:}
$\ell \times k$ matrix
$A' = (a_{i,j})_{i \in \Int{\ell}, j \in \Int{k}}$ over~$\Sigma_q$.

\quad \textbf{Output:}
$\ell \times n$ matrix
$(A' \;|\; A'') = (a_{i,j})_{i \in \Int{\ell}, j \in \Int{n}}$
over~$\Sigma_q$.

\quad $//$
$F = \GF(p)$, for a prime $p > 2 \threshold$.

\quad $//$
$m = \lceil \log_q p \rceil$
and $n = k + m(\tn-k)$.

\quad $//$
$\Code$ is a linear $\tau$-error-correcting $[\tn,k]$ code over~$F$.

\quad $//$
$\Encoder : F^k \rightarrow \Code$ is a systematic encoder.

\begin{clause}
\item
For all $i \in \Int{\ell}$ do $\{$
\begin{clause}
\item
Let $\tilde{\bldc} = (\tilde{c}_v)_{v \in \Int{\tn}}
\leftarrow \Encoder(A'_i \; \Mod \; p)$;
\item
For each $v \in \Int{\tn{-}k}$ do $\{$
\begin{clause}
\item
Set
$((A^{(0)})_{i,v} \; (A^{(1)})_{i,v} \; \ldots \; (A^{(m-1)})_{i,v})$
to be
\item
\quad
the base-$q$ representation of $\tilde{c}_{k+v}$.
\end{clause}
$\}$
\end{clause}
$\}$

\item
Let
$A'' \leftarrow
\left( A^{(0)} \;|\; A^{(1)} \;|\; \cdots \;|\; A^{(m-1)} \right)$.
\end{clause}

\end{myalgorithm}
\caption{Encoding mapping $\encoder : A' \mapsto (A' \;|\; A'')$
for the Hamming metric.}
\label{fig:Hammingencoder}
\end{figure}

Let the mapping $\lambda : \Integers^n \rightarrow F^{\tn}$ be
defined as follows:
for a vector $\bldx = (x_v)_{v \in \Int{n}} \in \Integers^n$,
the entries of the image
$\lambda(\bldx)=\tilde{\bldx}=(\tilde{x}_v)_{v \in \Int{\tn}} \in F^\tn$
are given by
\begin{equation}
\label{eq:homomorphism1}
\tilde{x}_v = x_v \; \Mod \; p \; ,
\quad
\textrm{for $v \in \Int{k}$} \; ,
\end{equation}
and
\begin{equation}
\label{eq:homomorphism2}
\tilde{x}_{k+v} =
\Bigl( \sum_{j \in \Int{m}} x_{k+v+j(\tilde{n}-k)}
q^j \Bigr) \; \Mod \; p \; ,
\quad
\textrm{for $v \in \Int{\tn{-}k}$} \; .
\end{equation}
It is easy to see that each row in~$\tilde{A}$ is obtained
by applying the mapping~$\lambda$ to the respective row
in~$A = (A' \;|\; A'')$.
Moreover, $\lambda$ is a homomorphism in that
it preserves vector addition and scalar multiplication: for every
$\bldx_1, \bldx_2 \in \Integers^n$ and $b_1, b_2 \in \Integers$,
\[
\lambda(b_1 \bldx_1 + b_2 \bldx_2) =
\overline{b}_1 \lambda(\bldx_1)
+ \overline{b}_2 \lambda(\bldx_2) \; ,
\]
where $\overline{b}_i = b_i \; \Mod \; p$
(seen as elements of~$F$).
Consequently, for every $\bldu \in \Sigma_q^k$,
\[
\lambda(\bldu A) = \bldu \tilde{A} \; \Mod \; p \in \Code \; .
\]

The properties of $\lambda(\cdot)$ immediately imply
a decoding algorithm (shown in Figure~\ref{fig:Hammingdecoder}).
Given the read vector $\bldy = \bldc + \blde$,
where $\weight(\blde) \le \tau$,
an application of~$\lambda$ to~$\bldy$ yields:
\[
\lambda(\bldy) = \lambda(\bldc) + \lambda(\blde) \; ,
\]
where $\lambda(\bldc) \in \Code$ and
$\weight(\lambda(\blde)) \le \weight(\blde) \le \tau$.
Hence, a decoder for~$\Code$, when applied
to~$\lambda(\bldy)$, will recover $\lambda(\blde)$.
By the definition of~$\lambda$,
the vectors~$\blde$ and $\lambda(\blde)$ coincide, modulo~$p$,
on their $k$-prefix; and since the values of the entries of
$\blde$ are all within $\pm \threshold$,
the $k$-prefix of $\lambda(\blde)$ uniquely determines
the $k$-prefix of~$\blde$.
\begin{figure}[hbt]
\centering
\begin{myalgorithm}
\vspace{-1ex}

\quad \textbf{Input:}
$\bldy \in \Sigma_q^n$.

\quad \textbf{Output:}
$\bldw \in \Sigma_q^k$.

\quad $//$
Parameters are as defined in Figure~\protect\ref{fig:Hammingencoder}.

\quad $//$
$\lambda : \Integers^n \rightarrow F^{\tn}$ is defined
by~(\ref{eq:homomorphism1})--(\ref{eq:homomorphism2}).

\quad $//$
$\Decoder : F^\tn \rightarrow F^\tn$ is a decoder for~$\Code$.

\begin{clause}
\item
Let $\tilde{\blde} = (\tilde{e}_j)_{j \in \tn}
\leftarrow \Decoder(\lambda(\bldy))$;

\item
Set $\blde' = (e_j)_{j \in \Int{k}}$ to
\[
e_j \leftarrow
\left\{
\begin{array}{rcl}
|\tilde{e}_j|  && \textrm{if $\tilde{e}_j$ is ``nonnegative'' in~$F$} \\
-|\tilde{e}_j| && \textrm{otherwise}
\end{array}
\right.
;
\]

\item
Let $\bldw \leftarrow \bldy' - \blde'$.
\end{clause}

\end{myalgorithm}
\caption{Decoding mapping
$\decoder : \bldy \mapsto \bldw$ for the Hamming metric.}
\label{fig:Hammingdecoder}
\end{figure}

Finally, we specialize to the case where~$\Code$ is
a (normalized and possibly shortened) BCH code. In this case,
\[
\tn-k \le
\left\lceil 1 + \frac{p{-}1}{p} (2\tau{-}1)
\right\rceil \cdot \left\lceil \log_p \tn \right\rceil
\]
(see~\cite[p.~260, Problem~8.13]{Roth}), and, so,
the redundancy of our coding scheme is bounded from above by
\[
n-k \le
\left\lceil 1 + \frac{p{-}1}{p} (2\tau{-}1)
\right\rceil \cdot
\underbrace{
\left\lceil \log_p \tn \right\rceil
\cdot \left\lceil \log_q p \right\rceil
}_{\approx \, \log_q n}
\; .
\]
For reference, recall that for every row index $i \in \Int{\ell}$,
the possible contents of~$A_i$ must form
an (ordinary) code over~$\Sigma_q$
of minimum Hamming distance at least~$2\tau+1$
(assuming that $\threshold \ge q{-}1$).
For~$n$ sufficiently large compared to~$\tau$, BCH codes over
$\GF(q)$ are the best codes currently known for all prime powers~$q$
except~$4$ and~$8$. Hence, we should expect the redundancy
of the coding scheme to be no less than
\[
\left\lceil 1 + \frac{q{-}1}{q} (2\tau{-}1)
\right\rceil \cdot
\left\lceil \log_q n \right\rceil \; .
\]

\section*{Acknowledgment}
The author would like to thank Dick Henze, Naveen Muralimanohar,
and John Paul Strachan for introducing him to the problem,
and for the many helpful discussions.

\ifIEEE
   \appendices
\else
   \section*{$\,$\hfill Appendices\hfill$\,$}
   \appendix
\fi

\section{Proofs}
\label{sec:proofs}

\begin{proof}[Proof of Proposition~\ref{prop:distance}]
For any two codewords $\bldc_1, \bldc_2 \in \code$
and a vector $\bldy \in \Sigma_Q^n$, we have
\[
\distance(\bldc_1,\bldc_2) \le
\distance(\bldy,\bldc_1) + \distance(\bldy,\bldc_2) \; .
\]
Hence, the inequalities
\[
\distance(\bldy,\bldc_1) \le \tau
\quad \textrm{and} \quad
\distance(\bldy,\bldc_2) \le \tau + \sigma
\]
can hold simultaneously, only if
$\distance(\bldc_1,\bldc_2) \le 2 \tau + \sigma < \distance(\code)$,
namely, only when $\bldc_1' = \bldc_2'$.
This, in turn, implies that the following decoding mapping is
well-defined and satisfies the correction and detection
conditions above: for every $\bldy \in \Sigma_Q^n$,
\[
\decoder(\bldy) =
\left\{
\begin{array}{ccl}
\bldc'      &&
\textrm{if there is $\bldc \in \code$ such that
$\distance(\bldy,\bldc) \le \tau$} \\
\failure && \textrm{otherwise}
\end{array}
\right.
.
\]
\end{proof}

(It can be easily shown that the condition on~$\tau$ and~$\sigma$
in Proposition~\ref{prop:distance} is also necessary:
if $2 \tau + \sigma \ge \distance(\code)$, then there can be
no decoding mapping that corrects
(\emph{any} pattern of up to) $\tau$ errors and detects
$\tau + \sigma$ errors.)

\begin{proof}[Proof of~\ref{prop:distanceHamming}]
Let~$\bldc_1$ and~$\bldc_2$ be codewords in~$\code$
with distinct $k$-prefixes. Ignoring the coordinates
that have been erased, these codewords will still differ
on at least $\distance_\Hamming(\bldc_1,\bldc_2)-\rho$ coordinates.
Next apply Proposition~\ref{prop:distance},
with $\distance(\code)$ therein replaced by
$\distance_\Hamming(\code)-\rho$.
\end{proof}

\section{Example}
\label{sec:example}

We include here an example of an execution
of the decoder in Figure~\ref{fig:twoerrordecoder}.

\begin{example}
\label{ex:twoerrordecoder}
Continuing Example~\ref{ex:twoerrorencoder},
for $\bldu = ( 1 \; 1 \; 1 )$, we get
\[
\bldc = \bldu A =
\left( \, 1\; 1\; 1\; 2\; 0\; 3\; 1\; 1\; 2\; 2 \bigm| 1\; 1\; 2\; 1\; 2
\bigm| 2\; 0\; 1\; 2\; 1 \bigm| 2 \, \right) \; .
\]
Suppose that the read vector is
\[
\bldy =
\left( \, 1\; 1\; 1\; 2\; 0\; 2\; 1\; 1\; 2\; 2 \bigm| 1\; 1\; 2\; 2\; 2
\bigm| 2\; 0\; 1\; 2\; 1 \bigm| 2 \, \right) \; .
\]
The syndrome of~$\bldy$ is
computed by~(\ref{eq:twoerrorsyndrome1})--(\ref{eq:twoerrorsyndrome3})
to yield
\[
(s_1 \; s_2 \; \hat{s}_2) =
(29 \; 8 \; 0) \; .
\]
A nonzero~$s_1$ indicates that (one or two) errors have occurred in
the $n_1$-prefix, $\bldy_1$, of~$\bldy$, and a zero~$\hat{s}_2$
then indicates that the remaining part of~$\bldy$ is error-free.
Hence, we are in the scenario of case~2
in the proof of Proposition~\ref{prop:twoerrors}, i.e.,
$(s_1 \; s_2) = \blde_1 H_\Ber^\transpose \; \Mod \; p$
(where $p = 31$),
which allows us to find~$\blde_1$ using a decoder for~$\Code_\Ber$.
Next, we recall the principles of the decoding algorithm
for~$\Code_\Ber$, by demonstrating them on our particular example.

We first observe that if $\| \blde_1 \| = 1$
then necessarily $s_2 \equiv s_1^3 \; (\mod \; 31)$.
Since this congruence does not hold in our case, we deduce
that that two errors have occurred,
say at positions $i, j \in \Int{n_1}$. We have
\begin{eqnarray}
\label{eq:s1}
s_1 & \equiv & e_i \alpha_i + e_j \alpha_j
\quad (\mod \; p) \\
\label{eq:s2}
s_2 & \equiv & e_i \alpha_i^3 + e_j \alpha_j^3
\quad (\mod \; p) \; ,
\end{eqnarray}
where $e_i, e_j \in \{ -1, 1 \}$. Squaring both sides
of~(\ref{eq:s1}) yields
\[
s_1^2 \equiv \alpha_i^2 + 2 (e_i \alpha_i) (e_j \alpha_j) + \alpha_j^2
\quad (\mod \; p) \; ,
\]
while dividing each side of~(\ref{eq:s2}) by the respective
side in~(\ref{eq:s1}) yields
\[
\frac{s_2}{s_1} \equiv
\alpha_i^2 - (e_i \alpha_i) (e_j \alpha_j) + \alpha_j^2
\quad (\mod \; p) \; ,
\]
where $1/s_1$ stands for the inverse of~$s_1$ modulo~$p$.
Subtracting each side of the last congruence from
the respective sides of the previous congruence leads to
\begin{equation}
\label{eq:s1s2}
s_1^2 - \frac{s_2}{s_1} \equiv
3 (e_i \alpha_i) (e_j \alpha_j)
\quad (\mod \; p) \; .
\end{equation}
It follows from~(\ref{eq:s1}) and~(\ref{eq:s1s2})
that $e_i \alpha_i$ and $e_j \alpha_j$ are solutions
of the following quadratic equation (in $F = \GF(p)$):
\[
x^2 -  s_1 x +
\frac{1}{3} \left( s_1^2 - \frac{s_2}{s_1} \right)
\equiv 0 \quad (\mod \; p) \; .
\]
Specifically, in our case,
\[
\frac{1}{s_1} = \frac{1}{29} \equiv 15 \quad (\mod \; 31)
\;\;\quad \textrm{and} \;\;\quad
\frac{1}{3} \equiv 21 \quad (\mod \; 31) \; ,
\]
resulting in the quadratic equation
\[
x^2 + 2 x + 13 \equiv 0 \quad (\mod \; 31) \; .
\]
The two roots of this equation in $\GF(31)$ are~$8$ and~$21$:
the first points at an error with a value~$1$ at location~$13$
(since $\alpha_{13} = 8$), and the second points
at an error with a value~$-1$ at location~$5$
(since $\alpha_5 = 10 = -21 \; \Mod \; 31$).\qed
\end{example}

\end{document}